\newtheorem{proposition}{Proposition}
\newtheorem{theorem}{Theorem}
\title{XDO: A Double Oracle Algorithm for Extensive-Form Games}
\author{%
  Stephen McAleer\\
  Department of Computer Science\\
  University of California, Irvine\\
  \texttt{smcaleer@uci.edu} \\
   \And
  John Lanier\\
  Department of Computer Science\\
  University of California, Irvine\\
  \texttt{jblanier@uci.edu}
  \And
  Kevin A. Wang\\
  Department of Computer Science\\
  University of California, Irvine\\
  \texttt{kevinwang@kevinwang.us}\\
  \And
  Pierre Baldi\\
  Department of Computer Science\\
  University of California, Irvine\\
  \texttt{pfbaldi@ics.uci.edu}\\
  \And
  Roy Fox\\
  Department of Computer Science\\
  University of California, Irvine\\
  \texttt{royf@uci.edu}
}
\begin{document}

\maketitle
\begin{abstract}
Policy Space Response Oracles (PSRO) is a reinforcement learning (RL) algorithm for two-player zero-sum games that has been empirically shown to find approximate Nash equilibria in large games. Although PSRO is guaranteed to converge to an approximate Nash equilibrium and can handle continuous actions, it may take an exponential number of iterations as the number of information states (infostates) grows. We propose Extensive-Form Double Oracle (XDO), an extensive-form double oracle algorithm for two-player zero-sum games that is guaranteed to converge to an approximate Nash equilibrium \textit{linearly} in the number of infostates. Unlike PSRO, which mixes best responses at the root of the game, XDO mixes best responses at every infostate. We also introduce Neural XDO (NXDO), where the best response is learned through deep RL. In tabular experiments on Leduc poker, we find that XDO achieves an approximate Nash equilibrium in a number of iterations an order of magnitude smaller than PSRO. Experiments on a modified Leduc poker game and Oshi-Zumo show that tabular XDO achieves a lower exploitability than CFR with the same amount of computation. We also find that NXDO outperforms PSRO and NFSP on a sequential multidimensional continuous-action game. NXDO is the first deep RL method that can find an approximate Nash equilibrium in high-dimensional continuous-action sequential games. Experiment code is available at \url{https://github.com/indylab/nxdo}.
\end{abstract}

\section{Introduction}
Policy Space Response Oracles (PSRO)~\citep{psro} is a reinforcement learning (RL) method for finding approximate Nash equilibria (NE) in large two-player zero-sum games. Methods based on PSRO have recently achieved state-of-the-art performance on large imperfect-information two-player zero-sum games such as Starcraft~\citep{alphastar} and Stratego~\citep{mcaleer2020pipeline}. One major benifit of PSRO versus other deep RL methods for two-player zero-sum games is that it is naturally compatible with games that have continuous actions. The only other deep RL method compatible with continuous actions, self play, is not guaranteed to converge to a Nash equilibrium even in small games like Rock Paper Scissors.
Despite the empirical success of PSRO, 
in the worst case, PSRO may need to expand all pure strategies in the normal form of the game, which grows exponentially in the number of information states (infostates).
The reason for this is that PSRO is based on the Double Oracle algorithm for normal-form games~\citep{double_oracle}, and a mixture of normal-form pure strategies is an inefficient representation of extensive-form policies. 

In this work, we propose a new double oracle algorithm, Extensive-Form Double Oracle (XDO), that is designed for extensive-form (sequential) games. 
Like PSRO, XDO keeps a population of pure strategies. At every iteration, XDO creates a restricted game by only considering actions that are chosen by at least one strategy in the population. This restricted game is then approximately solved via an extensive-form game solver, such as Counterfactual Regret Minimization (CFR)~\citep{cfr} or Fictitious Play (FP)~\citep{fp}, to find a meta-NE, which is extended to the full game by taking arbitrary actions at infostates not encountered in the restricted game. Next, a best response (BR) to the restricted game meta-NE is computed and added to the population. XDO can be viewed as a version of PSRO where the restricted game allows mixing population strategies not only at the root of the game, but at every infostate.

XDO is guaranteed to converge to an approximate NE in a number of iterations that is linear in the number of infostates, while PSRO may require a number of iterations exponential in the number of infostates.
Furthermore, on a worst-case family of games for the lower bound on the number of PSRO iterations, we show that XDO converges in a number of iterations that does not grow with the number of infostates, and grows only linearly with the number of actions at each infostate.

We also introduce a neural version of XDO, called Neural XDO (NXDO). NXDO can be used in games that are large enough to benefit from the generalization over infostates induced by neural-network strategies. 
NXDO learns approximate BRs through any deep reinforcement learning algorithm. The restricted game consists of meta-actions, each selecting a population policy to play the next action. 
This restricted game is then solved through any neural extensive-form game solver, such as NFSP~\citep{nfsp} or Deep CFR~\citep{deep_cfr}. In our experiments, we use PPO \citep{schulman2017proximal} or DDQN \citep{van2016deep} for the approximate BR and NFSP as the restricted game solver. Although convergence guarantees may not apply in such cases, like PSRO, NXDO is compatible with continuous action spaces.


In games with a large number of actions, NXDO and PSRO effectively prune the game tree and outperform methods such as Deep CFR and NFSP, which cannot be applied at all with continuous actions. Additionally, because PSRO might require an exponential number of pure strategies, NXDO outperforms PSRO on games that require mixing over multiple timesteps. 
To demonstrate the effectiveness of our approach on these types of games, we run experiments on two sets of environments. The first, $m$-Clone Leduc, is similar to Leduc poker but with every call, fold, and bet action duplicated $m$ times. 
The second, the Loss Game, is a sequential continuous-action multidimensional optimization game in which agents simultaneously adjust parameters to maximize or minimize a complex loss function. 
We show that tabular XDO greatly outperforms PSRO, CFR, and XFP~\citep{xfp} on $m$-Clone Leduc. We also show that NXDO outperforms both PSRO and NFSP on $m$-Clone Leduc and on the continuous-action Loss Game, where NFSP is provided a binned discrete action space. 

To summarize, our contributions are as follows:
\begin{itemize}
    \item We present a tabular extensive-form double oracle algorithm, XDO, that terminates in a linear number of iterations in the number of infostates.
    \item We present a neural version of XDO, NXDO, that outperforms PSRO and NFSP on both modified Leduc poker and sequential continuous-action games. NXDO is the first method that can find an approximate NE in high-dimensional continuous-action sequential games. 
\end{itemize}

\section{Background}
\subsection{Extensive-Form Games}
We consider partially-observable stochastic games \citep{hansen2004dynamic} which correspond to perfect-recall extensive-form games (from here on referred to as extensive-form games). An extensive-form game progresses through a sequence of player actions, and has a \textbf{world state} $w \in \mathcal{W}$ at each step. 
In an $N$-player game, $\mathcal{A} = \mathcal{A}_1 \times \cdots \times \mathcal{A}_N$ is the space of joint actions for the players. $\mathcal{A}_i(w)$ denotes the set of legal actions for player $i \in \mathcal{N} = \{1, \ldots, N\}$ at world state $w$ and $a = (a_1, \ldots, a_N) \in \mathcal{A}$ denotes a joint action. At each world state, after the players choose a joint action, a transition function $\mathcal{T}(w, a) \in \Delta^\mathcal{W}$ determines the probability distribution of the next world state $w'$. Upon transition from world state $w$ to $w'$ via joint action $a$, player $i$ makes an \textbf{observation} $o_i = \mathcal{O}_i(w,a,w')$. In each world state $w$, player $i$ receives a reward $\mathcal{R}_i(w)$.

A \textbf{history} is a sequence of actions and world states, denoted $h = (w^0, a^0, w^1, a^1, \ldots, w^t)$, where $w^0$ is the known initial world state of the game. $\mathcal{R}_i(h)$ and $\mathcal{A}_i(h)$ are, respectively, the reward and set of legal actions for player $i$ in the last world state of a history $h$. An \textbf{infostate} for player $i$, denoted by $s_i$, is a sequence of that player's observations and actions up until that time $s_i(h) = (a_i^0, o_i^1, a_i^1, \ldots, o_i^t)$. Define the set of all infostates for player $i$ to be $\mathcal{I}_i$. 
The set of histories that correspond to an infostate $s_i$ is denoted $\mathcal{H}(s_i) = \{ h: s_i(h) = s_i \}$, and it is assumed that they all share the same set of legal actions $\mathcal{A}_i(s_i(h)) = \mathcal{A}_i(h)$. 

A player's \textbf{policy} $\pi_i$ 
is a function mapping from an infostate to a probability distribution over actions. A \textbf{policy profile} $\pi$ is a tuple $(\pi_1, \ldots, \pi_N)$. All players other than $i$ are denoted $-i$, and their policies are jointly denoted $\pi_{-i}$. A policy for a history $h$ is denoted $\pi_i(h) = \pi_i(s_i(h))$ and $\pi(h)$ is the corresponding policy profile. 
We also define the transition function $\mathcal{T}(h, a_i, \pi_{-i}) \in \Delta^{\mathcal{W}}$ as a function drawing actions for $-i$ from $\pi_{-i}$ to form $a = (a_i, a_{-i})$ and to then sample the next world state $w'$ from $\mathcal{T}(w, a)$, where $w$ is the last world state in $h$.

The \textbf{expected value (EV)} $v_i^{\pi}(h)$ for player $i$ is the expected sum of future rewards for player $i$ in history $h$, when all players play policy profile $\pi$. The EV for an infostate $s_i$ is denoted $v_i^{\pi}(s_i)$ and the EV for the entire game is denoted $v_i(\pi)$. A \textbf{two-player zero-sum} game has $v_1(\pi) + v_2(\pi) = 0$ for all policy profiles $\pi$. The EV for an action in an infostate is denoted $v_i^{\pi}(s_i,a_i)$. A \textbf{Nash equilibrium (NE)} is a policy profile such that, if all players played their NE policy, no player could achieve higher EV by deviating from it. Formally, $\pi^*$ is a NE if $v_i(\pi^*) = \max_{\pi_i}v_i(\pi_i, \pi^*_{-i})$ for each player $i$.

The \textbf{exploitability} $e(\pi)$ of a policy profile $\pi$ is defined as $e(\pi) = \sum_{i \in \mathcal{N}} \max_{\pi'_i}v_i(\pi'_i, \pi_{-i})$. A \textbf{best response (BR)} policy $\mathbb{BR}_i(\pi_{-i})$ for player $i$ to a policy $\pi_{-i}$ is a policy that maximally exploits $\pi_{-i}$: $\mathbb{BR}_i(\pi_{-i}) = \arg\max_{\pi_i}v_i(\pi_i, \pi_{-i})$. An \textbf{$\bm{\epsilon}$-best response ($\bm{\epsilon}$-BR)} policy $\mathbb{BR}^\epsilon_i(\pi_{-i})$ for player $i$ to a policy $\pi_{-i}$ is a policy that is at most $\epsilon$ worse for player $i$ than the best response: $v_i(\mathbb{BR}^\epsilon_i(\pi_{-i}), \pi_{-i}) \ge v_i(\mathbb{BR}_i(\pi_{-i}), \pi_{-i}) - \epsilon$. An \textbf{$\bm{\epsilon}$-Nash equilibrium ($\bm{\epsilon}$-NE)} is a policy profile $\pi$ in which, for each player $i$, $\pi_i$ is an $\epsilon$-BR to $\pi_{-i}$. 


A \textbf{normal-form game} is a single-step extensive-form game. An extensive-form game induces a normal-form game in which the legal actions for player $i$ are its deterministic policies $\bigtimes_{s_i \in \mathcal{I}_i} \mathcal{A}_i(s_i)$. These deterministic policies are called \textbf{pure strategies} of the normal-form game. Since each deterministic policy specifies one action at every infostate, there are an exponential number of pure strategies in the number of infostates. A \textbf{mixed strategy} is a distribution over a player's pure strategies. Two policies $\pi_i^1$ and $\pi_i^2$ for player $i$ are said to be \textbf{realization-equivalent} if for any fixed strategy profile of the other player, both $\pi_i^1$ and $\pi_i^2$, define the same probability distribution over the states of the game. 

\begin{theorem}[Kuhn's Theorem \citep{kuhn1953contributions}]
Any mixed strategy in the normal form of a game is realization equivalent to a policy in the extensive form of that game, and vice versa.
\end{theorem}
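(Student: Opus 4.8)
The plan is to prove both directions by introducing, for each player $i$, the notion of a \emph{realization plan} and showing that realization equivalence is controlled entirely by this object. Fix player $i$ and, by perfect recall, associate to every infostate $s_i \in \mathcal{I}_i$ the unique sequence $\sigma(s_i)$ of player $i$'s own past infostate--action pairs that must occur before $s_i$ is reached; appending an action $a_i \in \mathcal{A}_i(s_i)$ yields the sequence $\sigma(s_i)\,a_i$. Given a strategy for player $i$ (mixed or behavioral) I would define its realization weight $r_i(\sigma)$ on each such sequence as the probability that player $i$ plays in a manner consistent with taking exactly the actions recorded in $\sigma$. The first step is a reduction lemma: for any fixed $\pi_{-i}$ and chance, the induced distribution over histories depends on player $i$'s strategy only through the weights $r_i$ on the sequences $\sigma(h)$ leading to the histories $h$. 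Consequently, to establish realization equivalence between two strategies of player $i$ it suffices to show they induce the same realization plan $r_i$, which reduces the theorem to proving that mixed and behavioral strategies realize the same collection of plans.

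For the behavioral-to-mixed direction, given a policy $\pi_i$ I would define a mixed strategy $\mu_i$ over pure strategies $b \in \bigtimes_{s_i \in \mathcal{I}_i} \mathcal{A}_i(s_i)$ by the product rule $\mu_i(b) = \prod_{s_i \in \mathcal{I}_i} \pi_i(s_i)\big(b(s_i)\big)$, treating the action choices at distinct infostates as independent. Summing $\mu_i(b)$ over all pure strategies $b$ consistent with a sequence $\sigma$ factors into the product of $\pi_i(s_i)(a_i)$ over the infostate--action pairs on $\sigma$ (the on-path factors) times a product of sums that each equal $1$ (one for each off-path infostate), so that $r_i^{\mu}(\sigma) = \prod_{(s_i,a_i) \in \sigma} \pi_i(s_i)(a_i) = r_i^{\pi}(\sigma)$ and the two realization plans coincide.

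For the harder mixed-to-behavioral direction, given $\mu_i$ with realization plan $r_i$ I would define the behavioral policy by the conditional ratios
\[
\pi_i(s_i)(a_i) = \frac{r_i\big(\sigma(s_i)\,a_i\big)}{r_i\big(\sigma(s_i)\big)}
\]
whenever $r_i(\sigma(s_i)) > 0$, and arbitrarily otherwise. These are valid distributions because the realization-plan constraint $\sum_{a_i \in \mathcal{A}_i(s_i)} r_i(\sigma(s_i)\,a_i) = r_i(\sigma(s_i))$ forces the ratios over $a_i$ to sum to $1$. It then remains to verify that the behavioral plan $\prod_{(s_i,a_i) \in \sigma} \pi_i(s_i)(a_i)$ telescopes back to $r_i(\sigma)$, which it does precisely because each prefix of $\sigma$ is the parent sequence $\sigma(s_i)$ of the next infostate on the path.

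I expect the main obstacle to be this last telescoping argument, and more specifically the role of \textbf{perfect recall}. Both the reduction lemma and the telescoping rely on each infostate $s_i$ possessing a \emph{unique} preceding sequence $\sigma(s_i)$ of player $i$'s actions, so that reaching $s_i$ depends only on the on-path choices and the realization weights factor cleanly along the path. Without perfect recall this uniqueness fails, the conditional ratios need not be consistent across histories sharing an infostate, and the equivalence breaks down; hence the care in the proof will go into justifying the sequence decomposition and the well-definedness of the conditional ratios under the perfect-recall assumption already in force.
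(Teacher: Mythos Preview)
The paper does not actually prove this statement: it is stated as Kuhn's classical theorem with a citation to \cite{kuhn1953contributions} and used as background, with no argument given. Your proposal is a correct and essentially standard proof via realization plans (the sequence-form decomposition): the reduction lemma isolating player $i$'s contribution through the weights $r_i(\sigma)$, the product construction for behavioral-to-mixed, and the conditional-ratio construction with telescoping for mixed-to-behavioral are all sound, and you correctly identify perfect recall as exactly the hypothesis that makes $\sigma(s_i)$ well-defined and the telescoping valid. The only minor point worth making explicit in a full write-up is that when $r_i(\sigma(s_i)) = 0$ the arbitrary definition of $\pi_i(s_i)$ is harmless because every extension of $\sigma(s_i)$ also has realization weight $0$, so both sides of the equivalence assign zero probability to any history passing through $s_i$; you allude to this but do not state it. Otherwise there is nothing to compare against in the paper itself.
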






\section{Related Work}
There has been much recent work on non-game-theoretic multi-agent RL \citep{foerster2018counterfactual, lowe2017multi, rashid2018qmix, bansal2017emergent}. Most of this work focuses on games with more than two players such as multi-agent cooperative games or mixed competitive-cooperative scenarios. In cooperative environments, self-play has empirically been shown to find an approximate NE \citep{lowe2017multi, majumdar2020evolutionary}, but can be brittle when cooperating with agents it hasn't trained with \citep{psro}. Self-play reinforcement learning has achieved expert level performance on video games \citep{alphastar, dota, pbt}, but is not guaranteed to converge to an approximate NE.

Extensive-form fictitious play (XFP) \citep{xfp} and counterfactual regret minimization (CFR) \citep{cfr} extend Fictitious Play (FP)~\citep{fp} and regret matching~\citep{rm}, respectively, to extensive-form games. 
Deep CFR~\citep{deep_cfr, steinberger2019single, li2018double} is a general method that trains a neural network on a buffer of counterfactual values. However, Deep CFR uses external sampling, which may be impractical for games with a large branching factor, such as Stratego and Barrage Stratego. DREAM \citep{steinberger2020dream} and ARMAC \citep{gruslys2020advantage} are model-free regret-based deep learning approaches. DREAM and ARMAC have achieved good results in poker games, but since they are based on MCCFR, like Deep CFR, they will not scale to games with continuous actions.

Our work is related to pruning approaches~\citep{brown2015regret, brown2017dynamic}. These methods start with all actions and sequentially remove actions that have low expected value. XDO instead starts with no actions and sequentially adds actions. Our work is also related to methods that automatically find abstractions \citep{brown2015simultaneous, vcermak2017algorithm}. 

Close to our work, \citet{bosansky2014exact} develop a sequence-form double oracle (SDO) algorithm.
The SDO algorithm iteratively adds sequence-form BRs to a population and then computes a meta-Nash on a restricted sequence-form game where only sequences in the population are allowed. In contrast, XDO iteratively adds extensive-form BRs to a population and then computes a meta-Nash on a restricted extensive form game where only actions in the population are allowed. DO, SDO, and XDO are fundamentally different because they operate on the normal form, sequence form, and extensive form, respectively. We give a detailed description of the difference between XDO and SDO in the supplementary materials.

\subsection{Neural Fictitious Self Play (NFSP)}
Neural Fictitious Self Play (NFSP) \citep{nfsp} approximates XFP by progressively training a best response against an average of all past policies using reinforcement learning. The average policy is represented by a neural network and is trained via supervised learning using a replay buffer of past best response actions. Each episode, both players either play from their best response policy with probability $\eta = 0.1$ or with their average policy with probability $1-\eta$. This experience is then added to the best response circular replay buffer and is used to train the best response for both players with off-policy DQN. If a player plays with their best response policy, the data is also added to the average policy reservoir replay buffer and is used to train the average policy via supervised learning. 

\subsection{Policy Space Response Oracles (PSRO)}
The Double Oracle algorithm \citep{double_oracle} is an algorithm for finding a NE in normal-form games. The algorithm works by keeping a population of policies $\Pi^t$ at time $t$. Each iteration a meta-Nash Equilibrium (meta-NE) $\pi^{*,t}$ is computed for the game restricted to policies in $\Pi^t$. Then, a best response to this meta-NE for each player $\mathbb{BR}_i(\pi^{*,t}_{-i})$ is computed and added to the population $\Pi_i^{t+1} = \Pi_i^t \cup \{\mathbb{BR}_i(\pi^{*,t}_{-i}) \}$ for $i \in \{1, 2\}$.

Policy Space Response Oracles (PSRO) \citep{psro} approximates the Double Oracle algorithm. The meta-NE is computed on the empirical game matrix $U^\Pi$, given by having each policy in the population $\Pi$ play each other policy and tracking average utility in a payoff matrix. In each iteration, an approximate best response to the current meta-NE over the policies is computed via any reinforcement learning algorithm. Pipeline PSRO parallelizes PSRO with convergence guarantees \citep{mcaleer2020pipeline}.

\subsubsection{PSRO Hard Instance}
A primary issue with PSRO is that it is based on a normal-form algorithm, and the number of pure strategies in a normal-form representation of an extensive-form game is exponential in the number of infostates. In contrast, our approach implements the double oracle algorithm directly in the extensive-form game, overcoming this problem and terminating in a linear number of iterations in the number of infostates. The following example helps illustrate this point.

Consider the game in Figure \ref{fig:game}. In this game, first, player 1 chooses which Rock Paper Scissors (RPS) game both players play. After player 1 chooses the RPS game, both players know which RPS game they are playing. Then both players simultaneously play an action in that RPS game. There are 6 pure strategies for player 1, denoted R1, P1, S1, R2, P2, S2. But there are 9 pure strategies for player 2. A pure strategy for player 2 specifies what move they play at each infostate. If player 2 played Rock in the first infostate and Paper in the second, that pure strategy is denoted RP. Note that if we generalize this game by including more RPS games and more actions in each game, the number of pure strategies for player 2 will be $|A|^{|\mathcal{I}|}$, where $|A|$ is the number of actions and $|\mathcal{I}|$ is the number of RPS games.


We conduct an experiment where we run PSRO with oracle BRs on this game with a random starting population each time. We find that PSRO expands all 9 row (player 2) pure strategies the majority of the time, expanding all 9 strategies in 122 out of 150 trials. These results are shown in Figure \ref{fig:hist}. We also find that the column player (player 1) expands all 6 pure strategies in all 150 trials. 

\begin{figure} 
\centering
\begin{subfigure}{.40\textwidth}
  \centering
  \includegraphics[width=\linewidth]{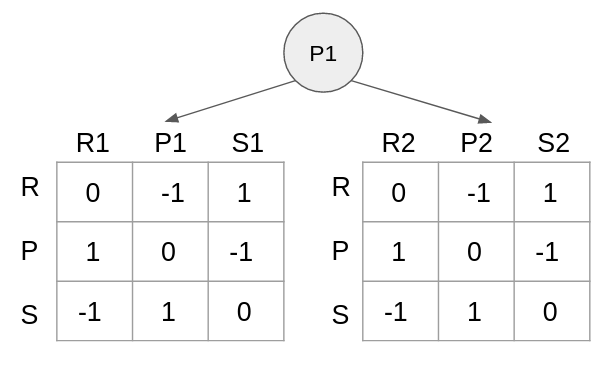}
  \caption{}
  \label{fig:ef}
\end{subfigure}%
\begin{subfigure}{.22\textwidth}
  \centering
  \includegraphics[width=.9\linewidth]{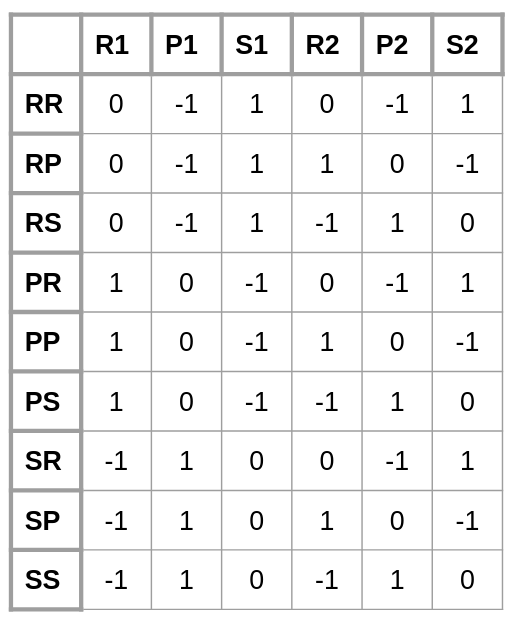}
  \caption{}
  \label{fig:nf}
\end{subfigure}
\begin{subfigure}{.35\textwidth}
    \includegraphics[width=\linewidth]{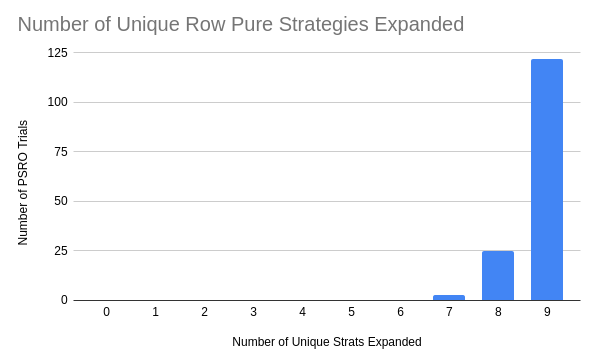}
    \caption{}
    \label{fig:hist}
\end{subfigure}

\caption{PSRO hard instance. (a) Player 1 first chooses which RPS game both players play. Both players know which RPS game they are playing. Then both players simultaneously make their move. (b) The normal form game. Player 2 has 9 pure strategies. (c) The proportion of PSRO trials that expanded each possible number of pure strategies for player 2. In the majority of trials, PSRO had to expand all possible pure strategies.}

\label{fig:game}
\end{figure}

\section{Extensive-Form Double Oracle (XDO)}

 

We propose Extensive-Form Double Oracle (XDO), a double-oracle (DO) algorithm designed for two-player zero-sum extensive-form games (Algorithm \ref{alg:XDO}). As in other DO algorithms, XDO maintains a population of pure strategies, and in each iteration computes a meta-NE of this population. Then the algorithm finds a best response (BR) to the meta-NE and adds it to the population.

In XDO, the population induces a different restricted game, and therefore a different population meta-NE, than in PSRO~\citep{psro}. In PSRO, a restricted normal-form game is induced by the empirical payoff matrix of population strategies. In XDO, a restricted extensive-form game is induced through a transformation on the original base extensive-form game that restricts the allowed actions at each infostate to only those suggested by any strategy in the population.


XDO uses a tabular method such as CFR~\citep{cfr} or XFP~\citep{xfp} to solve the restricted game. The algorithm terminates after an iteration in which neither of the players finds a BR that outperforms the meta-NE. When this happens, the meta-NE policies are approximate BRs to each other in the original game as well, and the meta-NE is therefore an approximate NE of the original game.

Importantly, at each but the final iteration of XDO, at least one player adds some new action at some non-terminal infostate, because a BR cannot outperform the meta-NE with only restricted-game actions.
The number of iterations that XDO takes to terminate is therefore at most the number of infostates, including terminal ones. In contrast, the best known guarantee for the number of iterations that PSRO takes to terminate (Proposition \ref{prop:psro}) is exponential in the number of infostates, because PSRO may need to add all pure strategies to the population. Moreover, computing the meta-NE in PSRO may become intractable in later iterations as the population size increases, while in XDO it is bounded by the unrestricted game.



Formally, XDO keeps a population of pure strategies $\Pi^t$ at time $t$. Each iteration, a restricted extensive-form game is created and a NE to the restricted game is computed. The restricted game is created by taking the original game and restricting the actions at every infostate $s_i$ to be only the actions where there exists a policy in the population $\Pi^t$ that chooses that action at that infostate: 
\begin{equation}\label{restricted_game}
\mathcal{A}^r_i(s_i) = \{a \in \mathcal{A}_i(s_i): \exists \pi_i \in \Pi_i^t\ \mathrm{s.t.}\ \pi_i(s_i, a) = 1\}.
\end{equation}

An $\epsilon$-NE policy $\pi^{r*}$ is then computed in this restricted game via a tabular method such as CFR and is extended to the full game by defining arbitrary actions on infostates not encountered in the restricted game. Next, BRs to this restricted game meta-NE $\mathbb{BR}_1(\pi^{r*}_2)$ and $\mathbb{BR}_2(\pi^{r*}_1)$ are computed via an oracle. These BRs are then added to the population of policies: $\Pi^{t+1}_i = \Pi^t_i \cup \mathbb{BR}_i(\pi^{r*}_{-i})$ for $i \in \{1, 2\}$.

The algorithm terminates when neither player benefits more than $\epsilon$ from deviating from the meta-NE to the BR, indicating that the meta-NE is an $\epsilon$-NE also in the original game (Proposition~\ref{prop:converge}). 

\begin{algorithm}[t]
    \caption{XDO}
    \label{alg:XDO} 
\begin{algorithmic}[1]

\algblockdefx{MRepeat}{EndRepeat}{\textbf{repeat}}{}
\algnotext{EndRepeat}

\State Input: initial population $\Pi^0$
\MRepeat
    \State Define restricted game for $\Pi^t$ via eq. (\ref{restricted_game})
    \State Get $\epsilon$-NE policy $\pi^{r*}$ of restricted game
    \State Find $\mathbb{BR}_i(\pi^{r*}_{-i})$ for $i \in \{1,2\}$ 
    \If{$v_i(\mathbb{BR}_i(\pi^{r*}_{-i}), \pi^{r*}_{-i}) \le v_i(\pi^{r*}) + \epsilon$ for both $i$}
        \State Terminate
    \EndIf
    \State $\Pi^{t+1}_i = \Pi^t_i \cup \mathbb{BR}_i(\pi^{r*}_{-i})$ for $i \in \{1, 2\}$
\EndRepeat
\end{algorithmic}
\end{algorithm}
To illustrate how XDO works, we demonstrate a simple game in Figure \ref{fig:xdo_game}. The algorithm starts with empty populations. At the first iteration (left diagram), player 1 adds a BR that plays Left at the first infostate (the root) and Right at the second one. Player 2 simultaneously adds a BR that plays Right at their single infostate. The restricted game now consists of only these added actions. At the second iteration (middle diagram), player 1 adds a BR that plays Right at both infostates, and player 2's BR still plays Right. The restricted game now includes both actions for the root infostate, but only Right is in the meta-NE. Next, in the third iteration (right diagram), player 1 keeps the same BR, while player 2's BR plays Left. In the meta-NE of this final restricted game, player 1 plays Left and Right with equal probability at the first infostate, and player 2 plays Left with probability 0.37 and Right with probability 0.63. Since the BRs to this meta-NE do not add any new actions, XDO terminates, and the meta-NE is the NE for the full game. Note that in this example, most actions are needed to find a NE. In games like this, it would be faster to simply solve the original game from the beginning. However, certain games, such as those in our experiments, have Nash equilibria that only need to mix over a small subset of actions \citep{schmid2014bounding}, in which case solving the XDO restricted game will be much faster than solving the original game.

\subsection{Theoretical Considerations}

In this section, we present a theoretical analysis of XDO and compare it with PSRO. Our first proposition states that, when XDO terminates, the final meta-NE of the restricted game is an approximate NE of the full game.

\begin{proposition}\label{prop:converge}
In XDO with an $\epsilon_1$-BR oracle, let $\pi^{r*}$ be the final $\epsilon_2$-NE in the restricted game. Then $\pi^{r*}$ is an $(\epsilon_1 + \epsilon_2)$-NE in the full game.
\end{proposition}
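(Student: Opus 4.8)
The plan is to reduce the claim to a per-player incentive bound and then close it with a short inequality chain combining the termination test with the oracle guarantee. By the definition of an approximate Nash equilibrium, it suffices to show, for each player $i$, that the exact full-game best response gains at most $\epsilon_1 + \epsilon_2$ over the profile $\pi^{r*}$; that is, $v_i(\mathbb{BR}_i(\pi^{r*}_{-i}), \pi^{r*}_{-i}) - v_i(\pi^{r*}) \le \epsilon_1 + \epsilon_2$, where $\mathbb{BR}_i$ denotes the \emph{exact} best response in the full game. So I would fix a player $i$ and let $b_i$ be the policy actually returned by the $\epsilon_1$-BR oracle on the terminating iteration.

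The first inequality is the oracle guarantee: since $b_i$ is an $\epsilon_1$-BR to $\pi^{r*}_{-i}$, the exact best response can beat it by at most $\epsilon_1$, giving $v_i(\mathbb{BR}_i(\pi^{r*}_{-i}), \pi^{r*}_{-i}) \le v_i(b_i, \pi^{r*}_{-i}) + \epsilon_1$. The second inequality is the termination test of Algorithm~\ref{alg:XDO}, which fired only because $v_i(b_i, \pi^{r*}_{-i}) \le v_i(\pi^{r*}) + \epsilon_2$ holds for both players, the termination tolerance being the same $\epsilon_2$ to which the restricted game was solved. Chaining the two gives $v_i(\mathbb{BR}_i(\pi^{r*}_{-i}), \pi^{r*}_{-i}) - v_i(\pi^{r*}) \le \epsilon_1 + \epsilon_2$, and since the argument is symmetric in $i$ it holds for both players, which is exactly the assertion that $\pi^{r*}$ is an $(\epsilon_1 + \epsilon_2)$-NE of the full game.

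The step I expect to require the most care is checking that the baseline $v_i(\pi^{r*})$ is a consistent reference value on both sides of the chain, since $\pi^{r*}$ is only specified on the restricted tree and extended arbitrarily to infostates the restricted game never reaches. The key observation is that when both players follow $\pi^{r*}$ only restricted actions are taken, so no off-tree infostate is reached and $v_i(\pi^{r*})$ is independent of the extension. When player $i$ best responds in the full game it may drive play off the restricted tree, so $v_i(\mathbb{BR}_i(\pi^{r*}_{-i}), \pi^{r*}_{-i})$ does depend on how $\pi^{r*}_{-i}$ is extended; the resolution is that the oracle computes its response, and the termination test evaluates its value, against that very same extended profile, so every term in the chain refers to one fixed extension and the inequalities compose unambiguously. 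I would also note that the $\epsilon_2$-NE property of the restricted game is what makes this termination tolerance meaningful, since it guarantees that no restricted-action deviation beats $\pi^{r*}$ by more than $\epsilon_2$; this is the fact underlying the separate iteration-count analysis, and here it justifies identifying the termination tolerance with $\epsilon_2$.
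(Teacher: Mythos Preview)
Your proof is correct and essentially identical to the paper's: both chain the termination test $v_i(b_i,\pi^{r*}_{-i}) \le v_i(\pi^{r*}) + \epsilon_2$ with the oracle guarantee $\max_{\pi'_i} v_i(\pi'_i,\pi^{r*}_{-i}) \le v_i(b_i,\pi^{r*}_{-i}) + \epsilon_1$ to bound each player's deviation gain by $\epsilon_1+\epsilon_2$. Your closing paragraph about the consistency of $v_i(\pi^{r*})$ under the arbitrary off-tree extension is a nice point of care that the paper's proof leaves implicit.
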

\begin{proof}
Due to limited space, all proofs are contained in the supplementary materials.
\end{proof}

\begin{figure*}[]
\centering
\includegraphics[width=1.0\columnwidth]{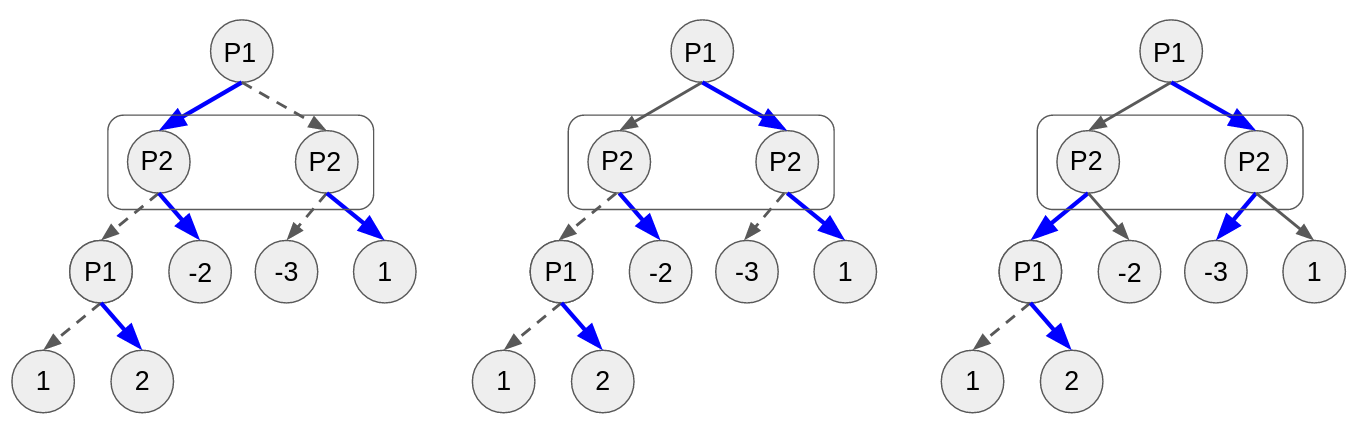}
\caption{Three iterations of XDO (left to right). In these extensive-form game diagrams, player 1 (P1) plays at the root, then P2 plays without knowing P1's action, and if both played Left P1 plays another action. P1's reward is number at the reached leaf. Actions in the restricted game are solid, vs. dashed outside the restricted game. Meta-NE actions are blue, vs. black not in the meta-NE.}
\label{fig:xdo_game}
\end{figure*}

The next two propositions show an exponential gap in the known guarantees for the number of iterations in which PSRO and XDO terminate.
If each non-terminal infostate allows $A$ different actions, PSRO is guaranteed to terminate in $\sum_i A^{|\bar{\mathcal{I}}_i|}$ iterations, where $\bar{\mathcal{I}}_i$ is the set of non-terminal infostates for player $i$, while XDO is guaranteed to terminate in $\sum_i |\mathcal{I}_i|$ iterations.

\begin{proposition}\label{prop:psro}
Normal-form DO terminates in at most $\sum_i \prod_{s_i \in \mathcal{I}_i} |\mathcal{A}_i(s_i)|$ iterations.
\end{proposition}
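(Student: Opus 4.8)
The plan is to argue by a monotonicity-and-counting argument on the size of the population. First I would recall, from the normal-form induction described in the Background, that the pure-strategy set of player $i$ is the set of deterministic policies $\bigtimes_{s_i \in \mathcal{I}_i} \mathcal{A}_i(s_i)$, so player $i$ has exactly $\prod_{s_i \in \mathcal{I}_i} |\mathcal{A}_i(s_i)|$ pure strategies, and the two players together have $\sum_i \prod_{s_i \in \mathcal{I}_i} |\mathcal{A}_i(s_i)|$ of them. The goal is to show that the combined population strictly grows at every non-terminal iteration, so that this total is an upper bound on the number of iterations.

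The central step is to show that at each iteration that is not the terminating one, at least one player adds a pure strategy that was not already in its population. I would establish this via the defining property of the meta-NE: since $\pi^{*,t}$ is a NE of the game restricted to $\Pi^t$, and the value against a fixed opponent is linear (so its maximum over mixed strategies is attained at a pure strategy), we have $\max_{\pi_i \in \Pi^t_i} v_i(\pi_i, \pi^{*,t}_{-i}) = v_i(\pi^{*,t})$; that is, no pure strategy already in $\Pi^t_i$ can earn more than $v_i(\pi^{*,t})$ against $\pi^{*,t}_{-i}$. Now suppose, toward contradiction, that neither player's best response added this iteration is new, i.e.\ $\mathbb{BR}_i(\pi^{*,t}_{-i}) \in \Pi^t_i$ for both $i$. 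Then for each $i$ the full-game best-response value $v_i(\mathbb{BR}_i(\pi^{*,t}_{-i}), \pi^{*,t}_{-i})$ equals the restricted best-response value $v_i(\pi^{*,t})$, so neither player gains from deviating and the termination test is met, contradicting the assumption that this is not the terminating iteration.

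Having established that each non-terminal iteration enlarges $\Pi^t = \Pi^t_1 \cup \Pi^t_2$ by at least one pure strategy, I would finish with the counting argument. The population is monotonically nondecreasing and is always a subset of the finite set of all $\sum_i \prod_{s_i \in \mathcal{I}_i} |\mathcal{A}_i(s_i)|$ pure strategies; since each non-terminal iteration strictly increases $|\Pi^t|$, the algorithm cannot run for more iterations than there are pure strategies in total. Equivalently, once every pure strategy of both players has been added, the restricted game coincides with the full game, its meta-NE is a full-game NE, and no best response can improve, forcing termination.

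I expect the main obstacle to be the careful handling of the meta-NE property in the contradiction step, in particular justifying that a best response already contained in the population cannot outperform the meta-NE value, which relies precisely on $\pi^{*,t}$ being an \emph{exact} equilibrium of the restricted game rather than merely an approximate one, so that the equality $\max_{\pi_i \in \Pi^t_i} v_i(\pi_i, \pi^{*,t}_{-i}) = v_i(\pi^{*,t})$ holds cleanly. If the restricted solver returned only an $\epsilon$-NE, the same scheme would go through with the termination threshold shifted by $\epsilon$, but for the exact normal-form DO statement here I would keep the restricted equilibrium exact.
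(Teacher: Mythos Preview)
Your proposal is correct and follows essentially the same approach as the paper: both argue that at each non-terminal iteration at least one player adds a new pure strategy, and then bound the number of iterations by the total count $\sum_i \prod_{s_i \in \mathcal{I}_i} |\mathcal{A}_i(s_i)|$ of pure strategies. The paper's proof simply asserts the ``new pure strategy each iteration'' step without justification, whereas you spell out the contradiction argument via the meta-NE property, which is a welcome elaboration rather than a different route.
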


\begin{proposition}\label{prop:xdo}
XDO terminates in at most $\sum_i |\mathcal{I}_i|$ iterations. 
\end{proposition}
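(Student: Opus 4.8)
The plan is to identify a monotone quantity that strictly increases on every iteration except the last, and then to bound its total growth by $\sum_i |\mathcal{I}_i|$.

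First I would make precise the observation flagged just before the statement: if iteration $t$ does not terminate, then the termination test fails for some player $i$, so $v_i(\mathbb{BR}_i(\pi^{r*}_{-i}),\pi^{r*}_{-i}) > v_i(\pi^{r*}) + \epsilon$. Since $\pi^{r*}$ is an $\epsilon$-NE of the restricted game, every policy that only ever selects actions in $\mathcal{A}^r_i(\cdot)$ gains at most $\epsilon$ against $\pi^{r*}_{-i}$; formally $\max_{\pi_i^r} v_i(\pi_i^r,\pi^{r*}_{-i}) \le v_i(\pi^{r*}) + \epsilon$, where the maximum is over restricted-game policies. Hence the full-game best response realizing the strict gain cannot be realization-equivalent to any restricted-game policy on the infostates reached under $\pi^{r*}_{-i}$: at some reachable infostate $s_i$ it must choose an action $a \notin \mathcal{A}^r_i(s_i)$. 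By the definition in eq. (\ref{restricted_game}), adding this pure strategy to $\Pi^{t+1}_i$ inserts $a$ into $\mathcal{A}^r_i(s_i)$, so at least one \emph{new} decision slot $(s_i,a)$ becomes available.

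Next, because the population only grows, the restricted action sets $\mathcal{A}^r_i(s_i)$ never shrink, so I would take the total number of filled decision slots $\sum_i \sum_{s_i \in \bar{\mathcal{I}}_i} |\mathcal{A}^r_i(s_i)|$ as a potential function: the previous step shows it increases by at least one at each non-final iteration and it never decreases, so the number of non-final iterations is at most the total number of slots, $\sum_i \sum_{s_i \in \bar{\mathcal{I}}_i} |\mathcal{A}_i(s_i)|$. (A single added pure strategy may fill several slots at once, which only tightens the count.)

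The main obstacle is the final conversion from slots to infostates, i.e.\ showing $\sum_{s_i \in \bar{\mathcal{I}}_i} |\mathcal{A}_i(s_i)| \le |\mathcal{I}_i| - 1$ rather than the much weaker product-style bound. Here I would use perfect recall: every decision slot $(s_i,a)$ is followed by at least one subsequent player-$i$ infostate (terminal ones included, as in the statement), and perfect recall makes the parent infostate and last action of any infostate uniquely recoverable, so the map sending each slot to one of its successor infostates is injective and misses the initial infostate. This gives $\sum_{s_i} |\mathcal{A}_i(s_i)| \le |\mathcal{I}_i| - 1$ for each player; summing over $i \in \{1,2\}$ and adding the single terminating iteration yields at most $\sum_i (|\mathcal{I}_i| - 1) + 1 \le \sum_i |\mathcal{I}_i|$ iterations. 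The care needed is exactly in this counting: confirming that each slot has a well-defined, distinct successor infostate under perfect recall and that terminal infostates are accounted for.
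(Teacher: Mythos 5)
Your argument is correct and follows essentially the same route as the paper's proof: you show that every non-terminal iteration forces some player's best response to use an action $a\notin\mathcal{A}^r_i(s_i)$ at some reachable infostate, and you convert the count of newly added $(s_i,a)$ pairs into a count of infostates by mapping each pair injectively to a successor infostate. The paper phrases this second step as marking successor infostates ``covered'' rather than as an explicit injection from decision slots, but the bookkeeping is identical, and your version even sharpens the constant slightly by excluding the initial infostate.
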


\paragraph{Tightness of the guarantees.} The guarantees in Proposition \ref{prop:psro} and Proposition \ref{prop:xdo} are tight in the sense that they are achieved in some games, but more nuanced analysis is required to identify easier cases where these bounds overestimate the complexity of the algorithms. Both PSRO and XDO often outperform these guarantees and terminate in fewer iterations. A case in which PSRO 
expands all pure normal-form strategies of an extensive-form game is described in the supplementary materials.

\begin{wrapfigure}{r}{0.3\textwidth}
    \centering
    \includegraphics[width=45mm]{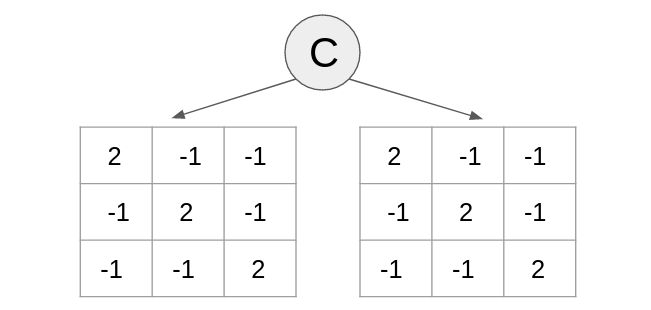}
    \caption{A $2$-GMP game with $n=3$ actions. The chance node selects uniformly at random which generalized matching pennies game is played. Both players know which stage game they play.}
    \label{fig:example_game}
\end{wrapfigure}

\paragraph{XDO can add multiple actions in each iteration.} In practice, XDO often outperforms the guarantee of Proposition \ref{prop:xdo} because it adds multiple actions in each iteration. Here we present and analyze a family of games in which XDO terminates in asymptotically fewer iterations than suggested by the bound in Proposition \ref{prop:xdo}.

In a generalized matching pennies (GMP) game, both players simultaneously choose one of $n$ actions. The payoff to player 1 is $n-1$ if the actions match, or $-1$ if they are different. In a $k$-GMP game (Figure \ref{fig:example_game}), a chance node first selects an index $j$ between $1$ and $k$, and then the players play the $j$'th of $k$ identical GMP games. The following proposition provides a tighter performance bound for XDO in this case, $2n$ iterations instead of $\sum_i |\mathcal{I}_i| = 2k(n + 1)$ (there are $kn$ terminal infostates for each player). For PSRO on $k$-GMP, no tighter bound than the $2n^k$ indicated by Proposition \ref{prop:psro} is known.

\begin{proposition}\label{prop:kgmp}
In $k$-GMP with $n$ actions, XDO terminates in $2n$ iterations.
\end{proposition}


\paragraph{Size of the restricted game.} The number of iterations in each algorithm does not provide the full picture of their performance, since iterations can require vastly different computation times. Intuitively, the restricted game in XDO is much larger than in PSRO when both algorithms have the same population size, because XDO induces an extensive-form restricted game with all discovered actions, while PSRO induces a normal-form restricted game with population policies as actions. However, as both algorithms progress, the XDO restricted game is bounded in size by the original game, while PSRO can induce a game with exponentially many actions.

\paragraph{XDO for sparse-support policies.} XDO is useful when the policies in the population do not cover the full original game, because when they do, finding the restricted game meta-NE is as hard as solving the original game. The motivation behind XDO is that, in games where the NE policies are supported by few actions in most infostates, XDO has the potential to quickly find these actions and terminate without expanding the full game.

To analyze this behavior, consider the $m$-clone GMP game, in which there are $mn$ actions partitioned into $n$ equal classes. The actions of the two players are considered a match (with payoff $n - 1$ to player 1) if they belong to the same class. In $(k, m)$-clone GMP, a chance node selects among $k$ identical $m$-clone GMP games. The following proposition states that in $(k, m)$-clone GMP with $n$ classes, XDO terminates after adding at most $2n$ actions for each player, instead of the full game of $kmn$ actions.


\begin{proposition}
In $(k, m)$-clone GMP with $n$ classes, XDO adds at most $2n$ actions for each player.
\end{proposition}


\paragraph{PSRO lower bound.} Similarly to XDO, PSRO can also outperform the guarantee of Proposition \ref{prop:psro} in certain cases. Generically, however, the linear upper bound on XDO established by Proposition \ref{prop:xdo}, $\sum_i |\mathcal{I}_i|$, is also a \emph{lower bound} on the normal-form population size of pure strategies that is needed to support a NE in PSRO. To show this, consider a perturbed $k$-GMP game, in which the payoffs in each GMP game are slightly modified to induce $k$ distinct NE. The following proposition establishes a linear lower bound for PSRO in perturbed $k$-GMP games.

\begin{proposition}
There exist perturbed $k$-GMP games with $n$ actions in which PSRO cannot terminate in fewer than $k(n - 1) + 1$ iterations.
\end{proposition}

\section{Neural Extensive-Form Double Oracle (NXDO)}

Neural Extensive-Form Double Oracle (NXDO) extends XDO to large games through deep reinforcement learning (DRL). Instead of using an oracle best response, NXDO instead uses approximate best responses that are trained via any DRL algorithm, such as PPO~\citep{schulman2017proximal} or DDQN~\citep{van2016deep}. 
Instead of representing the restricted game explicitly as the set of allowed actions in every infostate, to create its restricted game, NXDO replaces the original game action space with a discrete set of meta-actions, each corresponding to a population policy to which the actual action choice is delegated. 

Formally, NXDO (Algorithm \ref{alg:NXDO}) keeps a population of DRL policies $\Pi^t$ at time $t$. Each iteration, a restricted extensive-form game is created and a meta-NE to the restricted game is computed. The restricted game has meta-actions at every infostate that pick one policy from the population 
\begin{equation}\label{nxdo_restricted_game}
    \forall s_i \in \mathcal{I}_i \quad \mathcal{A}^r_i(s_i) = \{1, 2, ... , |\Pi_i^t|\}.
\end{equation}

\begin{algorithm}[t]
    \caption{NXDO}
    \label{alg:NXDO} 
\begin{algorithmic}[1]

\algblockdefx{MRepeat}{EndRepeat}{\textbf{repeat}}{}
\algnotext{EndRepeat}

\State Input: initial population $\Pi^0$
\MRepeat
    \State Define restricted game for $\Pi^t$ via eq. \eqref{nxdo_restricted_game}
    \State Get $\epsilon$-NE policy $\pi^{r*}$ of restricted game via NFSP
    \State Find $\mathbb{BR}_i(\pi^{r*}_{-i})$ for $i \in \{1,2\}$ via DRL
    \State $\Pi^{t+1}_i = \Pi^t_i \cup \mathbb{BR}_i(\pi^{r*}_{-i})$ for $i \in \{1, 2\}$
\EndRepeat
\end{algorithmic}
\end{algorithm}

While the action space differs, the restricted game states, observations, and histories remain the same as in the original game. 
After each player selects a meta-action that indicates a population policy, an action is sampled from that population policy and used for the world state transition.
With $\pi_i^1, \ldots, \pi_i^{|\Pi_i|}$ the population policies for player $i$, the transition function in the restricted game satisfies
\begin{equation}
    \mathcal{T}^r(h, a^r, w') = \sum_a \prod_i \pi_i^{a_i^r}(s_i(h), a_i) \mathcal{T}(h, a, w').
\end{equation}


With the restricted game thus defined, an $\epsilon$-meta-NE $\pi^{r*}$ is computed in this restricted game via a DRL method for finding NE, such as NFSP~\citep{nfsp} or DREAM~\citep{steinberger2020dream}. 
Approximate BRs $\mathbb{BR}_1(\pi^{r*}_2)$ and $\mathbb{BR}_2(\pi^{r*}_1)$ to this meta-NE are computed via a DRL algorithm, such as PPO or DDQN. These BRs are then added to the population of policies: $\Pi^{t+1}_i = \Pi^t_i \cup \mathbb{BR}_i(\pi^{r*}_{-i})$ for $i \in \{1, 2\}$. Provided that the DRL best responses are sufficiently close to oracle best responses and the inner-loop solver finds a sufficiently close approximate NE of the restricted game, NXDO inherits the same convergence properties as XDO. In practice, contemporary DRL methods lack any guarantee of providing approximate NE or BRs. Nevertheless, we show experimentally that exploitability can decrease through execution of NXDO faster than it does for PSRO and NFSP.

Because the original game action space has no influence on the NXDO restricted game action space, NXDO, like PSRO, is compatible with extremely large and continuous action spaces, provided that the deep RL BRs can operate in such an action space well. We demonstrate this capability in our Loss Game experiments, in which NXDO and PSRO use continuous-action PPO BRs. While no convergence guarantees are known in continuous-action games, NXDO and PSRO empirically produce meta-Nash strategies that are hard to exploit in our experiments.

A drawback of meta-actions that delegate actions to population policies is that, as in PSRO, the number of meta-actions grows linearly with the number of iterations. This can eventually make the restricted game harder to solve than the original game. 
In our experiments, however, NXDO achieves significant improvements in exploitability within a very small number of iterations, such that the issue of action delegation does not become an obstacle. In discrete action space games where it is tractable, we also consider a variant, NXDO-VA, where the restricted game is explicitly calculated and defined with valid and invalid original-game actions in the same way as with tabular XDO, using equation \eqref{restricted_game}. Because its restricted game action space size is at most equal to that of the original game, NXDO-VA does not suffer from the aforementioned drawback.

\section{Experiments}

\begin{figure}
    \centering
    \begin{subfigure}{0.3\textwidth}
        \centering
        \includegraphics[width=1\textwidth]{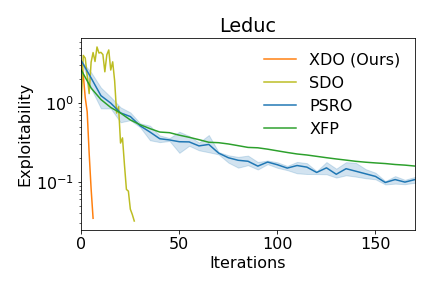}
        \caption{}
        \label{fig:xdo_psro}
    \end{subfigure}
    ~
    \begin{subfigure}{0.3\textwidth}
        \centering
        \includegraphics[width=1\textwidth]{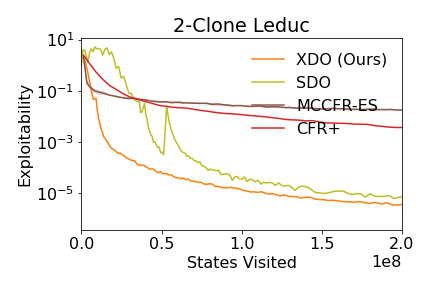}
        \caption{}
        \label{fig:dummy_leduc_info}
    \end{subfigure}
    ~
    \begin{subfigure}{0.3\textwidth}
        \centering
        \includegraphics[width=1\textwidth]{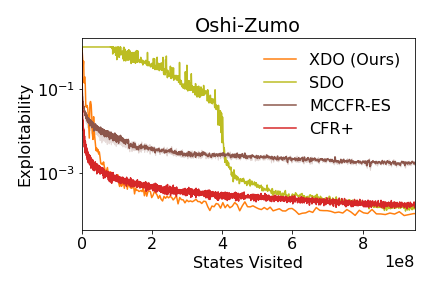}
        \caption{}
        \label{fig:oshi_zumo}
    \end{subfigure}
    
    \label{fig:xdo_cfr_dummy_leduc}
    \caption{(a) Exploitability in Leduc poker of XDO vs. PSRO, SDO, and XFP with oracle BRs throughout their iterations; (b) Exploitability in 2-Clone Leduc poker as a function of the number of game states visited by XDO, SDO, CFR$^+$, and MCCFR with external sampling (ES); (c) Exploitability in Oshi-Zumo as a function of the number of game states visited by XDO, SDO, CFR$^+$, and \mbox{MCCFR-ES}}
\end{figure}

For the tabular experiments, we use XDO with an oracle best response (BR) and CFR$^+$~\citep{DBLP:journals/corr/Tammelin14} for the inner-loop meta-NE solver. We compare XDO with PSRO~\citep{psro} and XFP~\citep{xfp}, which use oracle BRs as well. We also compare with CFR$^+$, and for both CFR$^+$ and XFP we follow the implementations in OpenSpiel \citep{lanctot2019openspiel}. We compare to SDO, using the same oracle BR and meta-NE solver as XDO. Since CFR$^+$, XFP, SDO, and XDO are deterministic, we do not plot error bars for these algorithms. For the neural experiments, we use NXDO with NFSP~\citep{nfsp} as the meta-NE solver and PSRO with FP~\citep{fp} as the meta-NE solver. NXDO and PSRO share the same BR configuration, using DDQN~\citep{van2016deep} for discrete action spaces and PPO~\citep{schulman2017proximal} for continuous action spaces. We compare these algorithms on $m$-Clone Leduc poker, Oshi-Zumo, and the Loss Game, described in the supplementary materials.

\subsection{Tabular Experiments with XDO}


Finding a normal-form meta-NE can be much less efficient and more exploitable than finding an extensive-form meta-NE. This means that PSRO will often require many more pure strategies to achieve a similar level of exploitability to XDO. Figure \ref{fig:xdo_psro} summarizes the results of running XDO, SDO, XFP, and PSRO with an oracle BR on Leduc poker. Even after 150 iterations, PSRO remains significantly more exploitable than XDO is at 7 iterations. XDO achieves exploitability of 0.1 in over 20x fewer iterations than PSRO. In large games where calculating many approximate BRs via reinforcement learning is expensive, requiring vastly more iterations can render PSRO infeasible. XFP performs similarly to PSRO in Leduc, as its average policy is equivalent to uniformly mixing population strategies at the root of the game. SDO takes more iterations than XDO to achieve low exploitability, because it adds fewer actions to the restricted game per iteration.



We compare XDO with SDO, CFR$^+$, and MCCFR \citep{lanctot2009monte} in 2-Clone Leduc poker. In Figure \ref{fig:dummy_leduc_info}, we plot the exploitability of these algorithms as a function of the number of game states visited by the algorithms. Since XFP and PSRO only use BR oracles, we do not include them in this analysis. Since CFR$^+$ updates every infostate every iteration, as we increase the number of cloned actions, the performance of CFR$^+$ will deteriorate. In contrast, XDO will tend to not add cloned actions, which allows the inner-loop CFR$^+$ to expand fewer infostates. These results for XDO are with a deterministic best response oracle. We found that if XDO randomly chose a best response instead, then XDO would still outperform CFR$^+$, but not by as much. The results in Figure \ref{fig:oshi_zumo} suggest that on Oshi-Zumo XDO outperforms SDO, MCCFR, and, to a lesser degree, CFR$^+$. We do not include XDO with MCCFR as the inner loop solver because it did not perform as well as XDO with CFR$^+$. 



\subsection{Neural Experiments with NXDO}

In Figure \ref{fig:20clone_leduc_nxdo}, we compare the exploitability of NXDO and NXDO-VA with PSRO and NFSP on 20-Clone Leduc poker. DDQN is used to train NXDO and PSRO BRs. Similar to the tabular experiments, we find that NXDO outperforms both methods. However, we find that the margin by which NXDO outperforms these methods is smaller than in tabular experiments. This can be attributed to the large proportion of experience required by NFSP to solve the restricted game relative to experience spent learning BRs. Training details and an analysis on the proportion of experience spent in the inner vs outer loop of NXDO are included in supplementary materials.

We also test NXDO and PSRO on the 2D and 16D continuous-action Loss Game, shown in Figures \ref{fig:2d_loss_game} and \ref{fig:16d_loss_game} respectively. PPO is used to train NXDO and PSRO BRs. In the 2D action space game, we compare to NFSP with a binned discrete action space. Because calculating exact exploitability is intractable for the Loss Game, for each algorithm checkpoint, we measure approximate exploitability by training a continuous-action PPO best response against it until saturation and measuring the BR's final mean reward. NXDO outperforms NFSP in the 2D game while operating in a continuous action space and outperforms PSRO in the 2D and 16D game which we conjecture is due to more effective use of population strategies in its restricted game.

\begin{figure}
    \centering
    \begin{subfigure}{0.3\textwidth}
        \centering
        \includegraphics[width=1\textwidth]{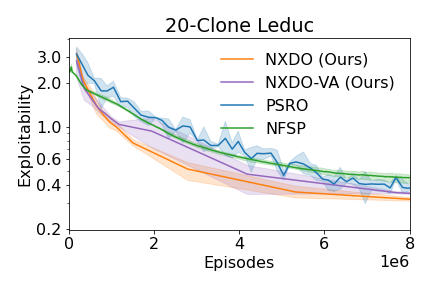}
        \caption{}
        \label{fig:20clone_leduc_nxdo}
    \end{subfigure}
    ~
    \begin{subfigure}{0.3\textwidth}
        \centering
        \includegraphics[width=1\textwidth]{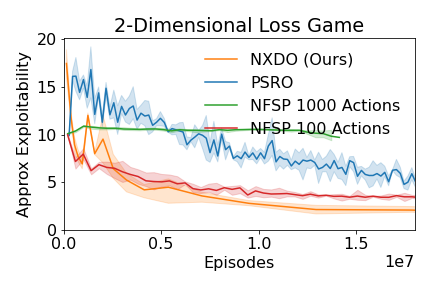}
        \caption{}
        \label{fig:2d_loss_game}
    \end{subfigure}
    ~
    \begin{subfigure}{0.3\textwidth}
        \centering
        \includegraphics[width=1\textwidth]{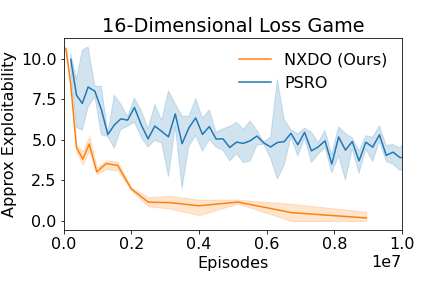}
        \caption{}
        \label{fig:16d_loss_game}
    \end{subfigure}
    \label{fig:searchnodes}
    \caption{(a) Exploitability in 20-Clone Leduc poker of NXDO, NXDO-VA, PSRO, and NFSP as a function of episodes gathered; (b and c) Approximate exploitability as a function of episodes gathered in the continuous-action Loss Game of NXDO, PSRO, and NFSP (with binned discrete actions).}
\end{figure}





\section{Conclusion}


PSRO and NXDO are the only existing game-theoretic deep RL methods that can work on large continuous-action games. In games where the NE must mix in many infostates, but only a small fraction of all actions are in the support of the NE at each infostate, we expect XDO and NXDO to outperform PSRO, because PSRO may require a superlinear, or even exponential number of pure strategies.
We also expect XDO and NXDO to outperform CFR and NFSP, respectively, on discrete-action games where the NE only needs to mix over a small fraction of actions. This is because CFR and NFSP scale poorly with the number of actions in the game, but XDO and NXDO tend to discover a set of relevant actions and ignore actions that are dominated or redundant. We hypothesize that games with these properties are prevalent across a number of domains such as large board games, video games, and robotics applications. 

\section{Acknowledgements}

The authors would like to thank Marc Lanctot and Julien Perolat for interesting and helpful discussions, and Chinmay Tyagi for help in prototyping related concepts.

Roy Fox's research is partly funded by the Hasso Plattner Foundation. 


\bibliographystyle{icml2021}
\bibliography{bibliography}

\begin{thebibliography}{37}
\providecommand{\natexlab}[1]{#1}
\providecommand{\url}[1]{\texttt{#1}}
\expandafter\ifx\csname urlstyle\endcsname\relax
  \providecommand{\doi}[1]{doi: #1}\else
  \providecommand{\doi}{doi: \begingroup \urlstyle{rm}\Url}\fi

\bibitem[Bansal et~al.(2017)Bansal, Pachocki, Sidor, Sutskever, and
  Mordatch]{bansal2017emergent}
Bansal, T., Pachocki, J., Sidor, S., Sutskever, I., and Mordatch, I.
\newblock Emergent complexity via multi-agent competition.
\newblock \emph{arXiv preprint arXiv:1710.03748}, 2017.

\bibitem[Berner et~al.(2019)Berner, Brockman, Chan, Cheung, Debiak, Dennison,
  Farhi, Fischer, Hashme, Hesse, et~al.]{dota}
Berner, C., Brockman, G., Chan, B., Cheung, V., Debiak, P., Dennison, C.,
  Farhi, D., Fischer, Q., Hashme, S., Hesse, C., et~al.
\newblock Dota 2 with large scale deep reinforcement learning.
\newblock \emph{arXiv preprint arXiv:1912.06680}, 2019.

\bibitem[Bosansky et~al.(2014)Bosansky, Kiekintveld, Lisy, and
  Pechoucek]{bosansky2014exact}
Bosansky, B., Kiekintveld, C., Lisy, V., and Pechoucek, M.
\newblock An exact double-oracle algorithm for zero-sum extensive-form games
  with imperfect information.
\newblock \emph{Journal of Artificial Intelligence Research}, 51:\penalty0
  829--866, 2014.

\bibitem[Brown(1951)]{fp}
Brown, G.~W.
\newblock Iterative solution of games by fictitious play.
\newblock \emph{Activity analysis of production and allocation}, pp.\
  374--376, 1951.

\bibitem[Brown \& Sandholm(2015{\natexlab{a}})Brown and
  Sandholm]{brown2015regret}
Brown, N. and Sandholm, T.
\newblock Regret-based pruning in extensive-form games.
\newblock In \emph{NIPS}, pp.\  1972--1980, 2015{\natexlab{a}}.

\bibitem[Brown \& Sandholm(2015{\natexlab{b}})Brown and
  Sandholm]{brown2015simultaneous}
Brown, N. and Sandholm, T.
\newblock Simultaneous abstraction and equilibrium finding in games.
\newblock In \emph{Twenty-fourth international joint conference on artificial
  intelligence}, 2015{\natexlab{b}}.

\bibitem[Brown et~al.(2017)Brown, Kroer, and Sandholm]{brown2017dynamic}
Brown, N., Kroer, C., and Sandholm, T.
\newblock Dynamic thresholding and pruning for regret minimization.
\newblock In \emph{Proceedings of the AAAI Conference on Artificial
  Intelligence}, volume~31, 2017.

\bibitem[Brown et~al.(2019)Brown, Lerer, Gross, and Sandholm]{deep_cfr}
Brown, N., Lerer, A., Gross, S., and Sandholm, T.
\newblock Deep counterfactual regret minimization.
\newblock In \emph{International Conference on Machine Learning}, pp.\
  793--802, 2019.

\bibitem[{\v{C}}erm{\'a}k et~al.(2017){\v{C}}erm{\'a}k, Bo{\v{s}}ansky, and
  Lisy]{vcermak2017algorithm}
{\v{C}}erm{\'a}k, J., Bo{\v{s}}ansky, B., and Lisy, V.
\newblock An algorithm for constructing and solving imperfect recall
  abstractions of large extensive-form games.
\newblock In \emph{Proceedings of the 26th International Joint Conference on
  Artificial Intelligence}, pp.\  936--942, 2017.

\bibitem[Foerster et~al.(2018)Foerster, Farquhar, Afouras, Nardelli, and
  Whiteson]{foerster2018counterfactual}
Foerster, J., Farquhar, G., Afouras, T., Nardelli, N., and Whiteson, S.
\newblock Counterfactual multi-agent policy gradients.
\newblock In \emph{Proceedings of the AAAI Conference on Artificial
  Intelligence}, volume~32, 2018.

\bibitem[Gruslys et~al.(2020)Gruslys, Lanctot, Munos, Timbers, Schmid, Perolat,
  Morrill, Zambaldi, Lespiau, Schultz, et~al.]{gruslys2020advantage}
Gruslys, A., Lanctot, M., Munos, R., Timbers, F., Schmid, M., Perolat, J.,
  Morrill, D., Zambaldi, V., Lespiau, J.-B., Schultz, J., et~al.
\newblock The advantage regret-matching actor-critic.
\newblock \emph{arXiv preprint arXiv:2008.12234}, 2020.

\bibitem[Hansen et~al.(2004)Hansen, Bernstein, and
  Zilberstein]{hansen2004dynamic}
Hansen, E.~A., Bernstein, D.~S., and Zilberstein, S.
\newblock Dynamic programming for partially observable stochastic games.
\newblock In \emph{AAAI}, volume~4, pp.\  709--715, 2004.

\bibitem[Hart \& Mas-Colell(2000)Hart and Mas-Colell]{rm}
Hart, S. and Mas-Colell, A.
\newblock A simple adaptive procedure leading to correlated equilibrium.
\newblock \emph{Econometrica}, 68\penalty0 (5):\penalty0 1127--1150, 2000.

\bibitem[Heinrich \& Silver(2016)Heinrich and Silver]{nfsp}
Heinrich, J. and Silver, D.
\newblock Deep reinforcement learning from self-play in imperfect-information
  games.
\newblock \emph{arXiv preprint arXiv:1603.01121}, 2016.

\bibitem[Heinrich et~al.(2015)Heinrich, Lanctot, and Silver]{xfp}
Heinrich, J., Lanctot, M., and Silver, D.
\newblock Fictitious self-play in extensive-form games.
\newblock In \emph{International Conference on Machine Learning}, pp.\
  805--813, 2015.

\bibitem[Jaderberg et~al.(2019)Jaderberg, Czarnecki, Dunning, Marris, Lever,
  Castaneda, Beattie, Rabinowitz, Morcos, Ruderman, et~al.]{pbt}
Jaderberg, M., Czarnecki, W.~M., Dunning, I., Marris, L., Lever, G., Castaneda,
  A.~G., Beattie, C., Rabinowitz, N.~C., Morcos, A.~S., Ruderman, A., et~al.
\newblock Human-level performance in 3d multiplayer games with population-based
  reinforcement learning.
\newblock \emph{Science}, 364\penalty0 (6443):\penalty0 859--865, 2019.

\bibitem[Kingma \& Ba(2015)Kingma and Ba]{DBLP:journals/corr/KingmaB14}
Kingma, D.~P. and Ba, J.
\newblock Adam: {A} method for stochastic optimization.
\newblock In Bengio, Y. and LeCun, Y. (eds.), \emph{3rd International
  Conference on Learning Representations, {ICLR} 2015, San Diego, CA, USA, May
  7-9, 2015, Conference Track Proceedings}, 2015.
\newblock URL \url{http://arxiv.org/abs/1412.6980}.

\bibitem[Kuhn \& Tucker(1953)Kuhn and Tucker]{kuhn1953contributions}
Kuhn, H.~W. and Tucker, A.~W.
\newblock \emph{Contributions to the Theory of Games}, volume~2.
\newblock Princeton University Press, 1953.

\bibitem[Lanctot et~al.(2009)Lanctot, Waugh, Zinkevich, and
  Bowling]{lanctot2009monte}
Lanctot, M., Waugh, K., Zinkevich, M., and Bowling, M.~H.
\newblock Monte carlo sampling for regret minimization in extensive games.
\newblock In \emph{NIPS}, pp.\  1078--1086, 2009.

\bibitem[Lanctot et~al.(2017)Lanctot, Zambaldi, Gruslys, Lazaridou, Tuyls,
  P{\'e}rolat, Silver, and Graepel]{psro}
Lanctot, M., Zambaldi, V., Gruslys, A., Lazaridou, A., Tuyls, K., P{\'e}rolat,
  J., Silver, D., and Graepel, T.
\newblock A unified game-theoretic approach to multiagent reinforcement
  learning.
\newblock In \emph{Advances in Neural Information Processing Systems}, pp.\
  4190--4203, 2017.

\bibitem[Lanctot et~al.(2019{\natexlab{a}})Lanctot, Lockhart, Lespiau,
  Zambaldi, Upadhyay, P\'{e}rolat, Srinivasan, Timbers, Tuyls, Omidshafiei,
  Hennes, Morrill, Muller, Ewalds, Faulkner, Kram\'{a}r, Vylder, Saeta,
  Bradbury, Ding, Borgeaud, Lai, Schrittwieser, Anthony, Hughes, Danihelka, and
  Ryan-Davis]{LanctotEtAl2019OpenSpiel}
Lanctot, M., Lockhart, E., Lespiau, J.-B., Zambaldi, V., Upadhyay, S.,
  P\'{e}rolat, J., Srinivasan, S., Timbers, F., Tuyls, K., Omidshafiei, S.,
  Hennes, D., Morrill, D., Muller, P., Ewalds, T., Faulkner, R., Kram\'{a}r,
  J., Vylder, B.~D., Saeta, B., Bradbury, J., Ding, D., Borgeaud, S., Lai, M.,
  Schrittwieser, J., Anthony, T., Hughes, E., Danihelka, I., and Ryan-Davis, J.
\newblock {OpenSpiel}: A framework for reinforcement learning in games.
\newblock \emph{CoRR}, abs/1908.09453, 2019{\natexlab{a}}.
\newblock URL \url{http://arxiv.org/abs/1908.09453}.

\bibitem[Lanctot et~al.(2019{\natexlab{b}})Lanctot, Lockhart, Lespiau,
  Zambaldi, Upadhyay, P{\'e}rolat, Srinivasan, Timbers, Tuyls, Omidshafiei,
  et~al.]{lanctot2019openspiel}
Lanctot, M., Lockhart, E., Lespiau, J.-B., Zambaldi, V., Upadhyay, S.,
  P{\'e}rolat, J., Srinivasan, S., Timbers, F., Tuyls, K., Omidshafiei, S.,
  et~al.
\newblock Openspiel: A framework for reinforcement learning in games.
\newblock \emph{arXiv preprint arXiv:1908.09453}, 2019{\natexlab{b}}.

\bibitem[Li et~al.(2018)Li, Hu, Ge, Jiang, Qi, and Song]{li2018double}
Li, H., Hu, K., Ge, Z., Jiang, T., Qi, Y., and Song, L.
\newblock Double neural counterfactual regret minimization.
\newblock \emph{arXiv preprint arXiv:1812.10607}, 2018.

\bibitem[Liang et~al.(2018)Liang, Liaw, Nishihara, Moritz, Fox, Goldberg,
  Gonzalez, Jordan, and Stoica]{pmlr-v80-liang18b}
Liang, E., Liaw, R., Nishihara, R., Moritz, P., Fox, R., Goldberg, K.,
  Gonzalez, J., Jordan, M., and Stoica, I.
\newblock {RL}lib: Abstractions for distributed reinforcement learning.
\newblock In Dy, J. and Krause, A. (eds.), \emph{Proceedings of the 35th
  International Conference on Machine Learning}, volume~80 of \emph{Proceedings
  of Machine Learning Research}, pp.\  3053--3062, Stockholmsmässan, Stockholm
  Sweden, 10--15 Jul 2018. PMLR.
\newblock URL \url{http://proceedings.mlr.press/v80/liang18b.html}.

\bibitem[Lowe et~al.(2017)Lowe, Wu, Tamar, Harb, Abbeel, and
  Mordatch]{lowe2017multi}
Lowe, R., Wu, Y.~I., Tamar, A., Harb, J., Abbeel, O.~P., and Mordatch, I.
\newblock Multi-agent actor-critic for mixed cooperative-competitive
  environments.
\newblock In \emph{Advances in neural information processing systems}, pp.\
  6379--6390, 2017.

\bibitem[Majumdar et~al.(2020)Majumdar, Khadka, Miret, Mcaleer, and
  Tumer]{majumdar2020evolutionary}
Majumdar, S., Khadka, S., Miret, S., Mcaleer, S., and Tumer, K.
\newblock Evolutionary reinforcement learning for sample-efficient multiagent
  coordination.
\newblock In \emph{International Conference on Machine Learning}, pp.\
  6651--6660. PMLR, 2020.

\bibitem[McAleer et~al.(2020)McAleer, Lanier, Fox, and
  Baldi]{mcaleer2020pipeline}
McAleer, S., Lanier, J., Fox, R., and Baldi, P.
\newblock Pipeline psro: A scalable approach for finding approximate nash
  equilibria in large games.
\newblock 2020.

\bibitem[McMahan et~al.(2003)McMahan, Gordon, and Blum]{double_oracle}
McMahan, H.~B., Gordon, G.~J., and Blum, A.
\newblock Planning in the presence of cost functions controlled by an
  adversary.
\newblock In \emph{Proceedings of the 20th International Conference on Machine
  Learning (ICML-03)}, pp.\  536--543, 2003.

\bibitem[Rashid et~al.(2018)Rashid, Samvelyan, De~Witt, Farquhar, Foerster, and
  Whiteson]{rashid2018qmix}
Rashid, T., Samvelyan, M., De~Witt, C.~S., Farquhar, G., Foerster, J., and
  Whiteson, S.
\newblock Qmix: Monotonic value function factorisation for deep multi-agent
  reinforcement learning.
\newblock \emph{arXiv preprint arXiv:1803.11485}, 2018.

\bibitem[Schmid et~al.(2014)Schmid, Moravcik, and Hladik]{schmid2014bounding}
Schmid, M., Moravcik, M., and Hladik, M.
\newblock Bounding the support size in extensive form games with imperfect
  information.
\newblock In \emph{Proceedings of the AAAI Conference on Artificial
  Intelligence}, volume~28, 2014.

\bibitem[Schulman et~al.(2017)Schulman, Wolski, Dhariwal, Radford, and
  Klimov]{schulman2017proximal}
Schulman, J., Wolski, F., Dhariwal, P., Radford, A., and Klimov, O.
\newblock Proximal policy optimization algorithms.
\newblock \emph{arXiv preprint arXiv:1707.06347}, 2017.

\bibitem[Steinberger(2019)]{steinberger2019single}
Steinberger, E.
\newblock Single deep counterfactual regret minimization.
\newblock \emph{arXiv preprint arXiv:1901.07621}, 2019.

\bibitem[Steinberger et~al.(2020)Steinberger, Lerer, and
  Brown]{steinberger2020dream}
Steinberger, E., Lerer, A., and Brown, N.
\newblock Dream: Deep regret minimization with advantage baselines and
  model-free learning.
\newblock \emph{arXiv preprint arXiv:2006.10410}, 2020.

\bibitem[Tammelin(2014)]{DBLP:journals/corr/Tammelin14}
Tammelin, O.
\newblock Solving large imperfect information games using {CFR+}.
\newblock \emph{CoRR}, abs/1407.5042, 2014.
\newblock URL \url{http://arxiv.org/abs/1407.5042}.

\bibitem[Van~Hasselt et~al.(2016)Van~Hasselt, Guez, and Silver]{van2016deep}
Van~Hasselt, H., Guez, A., and Silver, D.
\newblock Deep reinforcement learning with double q-learning.
\newblock In \emph{Proceedings of the AAAI Conference on Artificial
  Intelligence}, volume~30, 2016.

\bibitem[Vinyals et~al.(2019)Vinyals, Babuschkin, Czarnecki, Mathieu, Dudzik,
  Chung, Choi, Powell, Ewalds, Georgiev, et~al.]{alphastar}
Vinyals, O., Babuschkin, I., Czarnecki, W.~M., Mathieu, M., Dudzik, A., Chung,
  J., Choi, D.~H., Powell, R., Ewalds, T., Georgiev, P., et~al.
\newblock Grandmaster level in starcraft ii using multi-agent reinforcement
  learning.
\newblock \emph{Nature}, 575\penalty0 (7782):\penalty0 350--354, 2019.

\bibitem[Zinkevich et~al.(2008)Zinkevich, Johanson, Bowling, and Piccione]{cfr}
Zinkevich, M., Johanson, M., Bowling, M., and Piccione, C.
\newblock Regret minimization in games with incomplete information.
\newblock In \emph{Advances in neural information processing systems}, pp.\
  1729--1736, 2008.

\end{thebibliography}

\newpage
\appendix

\section{Proofs}

\begin{proposition}
In XDO with an $\epsilon_1$-BR oracle, let $\pi^{r*}$ be the final $\epsilon_2$-NE in the restricted game. Then $\pi^{r*}$ is an $(\epsilon_1 + \epsilon_2)$-NE in the full game.
\end{proposition}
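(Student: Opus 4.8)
The plan is to establish the $(\epsilon_1 + \epsilon_2)$-NE property of $\pi^{r*}$ directly from the definition by chaining two bounds: one coming from the restricted-game equilibrium guarantee and one from the inexactness of the best-response oracle. Recall that $\pi^{r*}$ is an $(\epsilon_1 + \epsilon_2)$-NE in the full game precisely when, for each player $i$, the true full-game best response gains at most $\epsilon_1 + \epsilon_2$, i.e. $v_i(\mathbb{BR}_i(\pi^{r*}_{-i}), \pi^{r*}_{-i}) \le v_i(\pi^{r*}) + \epsilon_1 + \epsilon_2$. I would prove this inequality for a fixed but arbitrary player $i$; the symmetric argument for the other player then gives the claim.

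First I would interpose the oracle's response $b_i = \mathbb{BR}^{\epsilon_1}_i(\pi^{r*}_{-i})$ between the true best response and $\pi^{r*}$. By definition of an $\epsilon_1$-BR oracle, $b_i$ is within $\epsilon_1$ of the optimal full-game response to $\pi^{r*}_{-i}$, so $v_i(\mathbb{BR}_i(\pi^{r*}_{-i}), \pi^{r*}_{-i}) \le v_i(b_i, \pi^{r*}_{-i}) + \epsilon_1$. It then remains to bound $v_i(b_i, \pi^{r*}_{-i})$ by $v_i(\pi^{r*}) + \epsilon_2$.

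The crux is the termination step. Because XDO terminated at this iteration, $b_i$ adds no new action at any infostate reachable under $(b_i, \pi^{r*}_{-i})$; equivalently, $b_i$ selects only actions in the restricted action sets $\mathcal{A}^r_i$. Consequently $b_i$ is realization-equivalent to a restricted-game policy, and since both $b_i$ and $\pi^{r*}_{-i}$ play only restricted-game actions, the play never leaves the restricted game, so the full-game values coincide with the restricted-game values for $(b_i, \pi^{r*}_{-i})$ and for $\pi^{r*}$ itself (here the arbitrary extension of $\pi^{r*}$ to unreached infostates is irrelevant). Invoking that $\pi^{r*}$ is an $\epsilon_2$-NE of the restricted game, no restricted-game deviation beats it by more than $\epsilon_2$, hence $v_i(b_i, \pi^{r*}_{-i}) \le v_i(\pi^{r*}) + \epsilon_2$. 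Substituting into the previous bound yields $v_i(\mathbb{BR}_i(\pi^{r*}_{-i}), \pi^{r*}_{-i}) \le v_i(\pi^{r*}) + \epsilon_1 + \epsilon_2$, as required.

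I expect the main obstacle to be making this termination step fully rigorous: justifying that XDO's stopping criterion forces $b_i$ to behave like a restricted-game policy, and carefully arguing that restricting attention to restricted-game actions makes the full-game and restricted-game values agree even though $\pi^{r*}$ is only arbitrarily defined off the restricted game. Everything else is a two-line chaining of inequalities. An alternative that sidesteps the structural argument is to read the bound $v_i(b_i, \pi^{r*}_{-i}) \le v_i(\pi^{r*}) + \epsilon_2$ straight off the algorithm's value-based termination test with threshold $\epsilon_2$, which gives the same conclusion.
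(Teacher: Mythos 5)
Your main argument rests on a premise that the algorithm does not actually supply. You claim that ``because XDO terminated at this iteration, $b_i$ adds no new action at any infostate reachable under $(b_i, \pi^{r*}_{-i})$,'' and then build the bound $v_i(b_i, \pi^{r*}_{-i}) \le v_i(\pi^{r*}) + \epsilon_2$ on the resulting realization-equivalence to a restricted-game policy. But XDO's termination criterion is purely a \emph{value} test, $v_i(\mathbb{BR}_i(\pi^{r*}_{-i}), \pi^{r*}_{-i}) \le v_i(\pi^{r*}) + \epsilon$ for both players; nothing in the algorithm prevents the oracle's response at the final iteration from using actions outside $\mathcal{A}^r_i$ while still failing to improve on $\pi^{r*}$ by more than the threshold. (The converse direction --- a BR confined to restricted actions cannot beat an $\epsilon$-NE of the restricted game by much --- is what the paper uses elsewhere to argue that every non-final iteration must add a new action, but that is a different implication from the one you need here.) So the structural step of your main route is unjustified, and the subsequent appeal to the $\epsilon_2$-NE property of the restricted game does not go through as written.

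That said, the one-sentence alternative you offer at the end is exactly the paper's proof: read $v_i(\mathbb{BR}^{\epsilon_1}_i(\pi^{r*}_{-i}), \pi^{r*}_{-i}) \le v_i(\pi^{r*}) + \epsilon_2$ directly off the termination test (with the understanding that the BR appearing in that test is the oracle's $\epsilon_1$-BR and the threshold is $\epsilon_2$), then chain it with $\max_{\pi'_i} v_i(\pi'_i, \pi^{r*}_{-i}) \le v_i(\mathbb{BR}^{\epsilon_1}_i(\pi^{r*}_{-i}), \pi^{r*}_{-i}) + \epsilon_1$ from the definition of the $\epsilon_1$-BR oracle. Your first inequality (interposing $b_i$ between the true best response and $\pi^{r*}$) is already the right half of this chain; you should simply take the second half from the termination condition rather than from a structural claim about which actions $b_i$ uses. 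Had the structural route been needed, you would also have had to worry about whether ``reachable under $(b_i,\pi^{r*}_{-i})$'' suffices, and about chance nodes, none of which arises in the value-based argument.
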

\begin{proof}
For each $i \in \{1, 2\}$, let $\mathbb{BR}^{\epsilon_1}_i(\pi^{r*}_{-i})$ be player $i$'s $\epsilon_1$-BR to $\pi_{-i}^{r*}$ obtained in the last iteration.
By the termination condition 
\begin{align}
    v_i(\pi^{r*}) &\ge v_i(\mathbb{BR}^{\epsilon_1}_i(\pi^{r*}_{-i}), \pi^{r*}_{-i}) - \epsilon_2 \\
    &\ge \max_{\pi'_i} v_i(\pi'_i, \pi^{r*}_{-i}) - \epsilon_1 - \epsilon_2,
\end{align}
where the last inequality follows from $\mathbb{BR}^{\epsilon_1}_i(\pi^{r*}_{-i})$ being an $\epsilon_1$-best response to $\pi^{r*}_{-i}$.
\end{proof}



\begin{proposition}\label{prop:psro}
Normal-form DO terminates in $\sum_i \prod_{s_i \in \mathcal{I}_i} |\mathcal{A}_i(s_i)|$ iterations.
\end{proposition}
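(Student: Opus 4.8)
The plan is to run a potential (monovariant) argument: track the total number of distinct pure strategies held across both players' populations, show that this count must strictly increase on every non-terminal iteration, and observe that it is capped by the total number of pure strategies in the induced normal-form game. First I would record the size of the strategy space. In the normal form induced by the extensive-form game, a pure strategy for player $i$ is a deterministic policy that selects a single legal action at each infostate, so player $i$ has exactly $\prod_{s_i \in \mathcal{I}_i} |\mathcal{A}_i(s_i)|$ pure strategies; summing over $i$ gives $M = \sum_i \prod_{s_i \in \mathcal{I}_i} |\mathcal{A}_i(s_i)|$ as the total number of pure strategies available to the two players combined.

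The core step is to show that whenever DO does not terminate at an iteration with meta-NE $\pi^{*,t}$, at least one player's best response $\mathbb{BR}_i(\pi^{*,t}_{-i})$ is a pure strategy not already contained in $\Pi^t_i$. I would argue this by contraposition: suppose both best responses already lie in the current populations. Because $\pi^{*,t}$ is a meta-NE of the restricted normal-form game over $\Pi^t$, no strategy already in $\Pi^t_i$ can earn player $i$ more than $v_i(\pi^{*,t})$ against $\pi^{*,t}_{-i}$; in particular the best response, being in $\Pi^t_i$, yields at most $v_i(\pi^{*,t})$. Since a best response also yields at least the meta-NE value (the meta-NE strategy itself is an available deviation), its value equals $v_i(\pi^{*,t})$ for both $i$, which is precisely the termination criterion. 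Hence failing to terminate forces a genuinely new pure strategy to be added, so the combined population size strictly increases by at least one.

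Finally I would close with the counting argument: the combined population is a subset of the $M$ available pure strategies, so its size is bounded by $M$ and can strictly increase at most $M$ times. Therefore DO performs at most $M$ iterations before the termination condition is met, giving the claimed bound. I expect the only delicate point to be the core step, specifically invoking the defining property that the meta-NE is unexploitable \emph{within} the restricted strategy set in order to convert ``the best response is already present'' into ``the termination condition holds''; the remainder is routine bookkeeping, and the ``at most'' phrasing absorbs any off-by-one in how the terminal iteration is counted.
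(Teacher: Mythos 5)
Your proof is correct and follows essentially the same counting argument as the paper's: each non-terminal iteration adds at least one new pure strategy to some player's population, and the combined strategy space has size $\sum_i \prod_{s_i \in \mathcal{I}_i} |\mathcal{A}_i(s_i)|$. The paper asserts the key step (that a genuinely new strategy must be added) without justification, whereas you supply the contrapositive argument via the meta-NE's unexploitability within the restricted strategy set; this is a valid elaboration of the same route rather than a different approach.
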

\begin{proof}
In each iteration of DO, at least one player adds a new normal-form pure strategy to the population.
The space of pure strategies for player $i$ has size $\prod_{s_i \in \mathcal{I}_i} |\mathcal{A}_i(s_i)|$, because each normal-form pure strategy specifies an action at each infostate for that player.
\end{proof}

\begin{proposition}\label{prop:xdo}
XDO terminates in $\sum_i |\mathcal{I}_i|$ iterations. 
\end{proposition}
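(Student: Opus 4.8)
The plan is to show that each iteration before termination strictly enlarges the restricted action sets, and then bound the total number of such enlargements by $\sum_i |\mathcal{I}_i|$ using perfect recall. First I would argue that every non-final iteration inserts at least one previously-absent action at some infostate. If iteration $t$ is not the final one, the termination condition fails, so some player $i$ satisfies $v_i(\mathbb{BR}_i(\pi^{r*}_{-i}), \pi^{r*}_{-i}) > v_i(\pi^{r*}) + \epsilon$. Because $\pi^{r*}$ is an $\epsilon$-NE of the restricted game and $\pi^{r*}_{-i}$ plays only restricted actions, any policy of player $i$ that also uses only restricted actions can gain at most $\epsilon$ over $v_i(\pi^{r*})$. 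Hence the best response, which gains strictly more than $\epsilon$, cannot be realization-equivalent (Kuhn's Theorem) to any restricted-game policy, and so it must select, at some infostate it reaches, an action outside the current set $\mathcal{A}^r_i(s_i)$. By eq. (\ref{restricted_game}), adding this best response to $\Pi^t$ places that action into the restricted action set, so the restricted game grows strictly.

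Next I would bound the number of (infostate, action) pairs that can ever be added. The restricted action sets only grow and are contained in the full action sets, so the number of non-final iterations is at most the number of distinct legal (non-terminal infostate, action) pairs, summed over players. The key step is a counting argument via perfect recall: for player $i$, map each non-root infostate $s_i'$ to the unique pair $(s_i, a)$ consisting of $i$'s immediately preceding infostate and the action taken there, which is well-defined precisely because perfect recall makes $s_i'$ encode this history. Every legal pair $(s_i,a)$ with $s_i$ non-terminal is the label of at least one successor infostate (the game must continue to some decision or terminal infostate of player $i$ after $a$ is played), so this map is onto the set of legal pairs. Therefore the number of addable pairs for player $i$ is at most $|\mathcal{I}_i| - 1$. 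Summing over players and adding one iteration for the final, terminating step yields at most $\sum_i(|\mathcal{I}_i| - 1) + 1 \le \sum_i |\mathcal{I}_i|$ iterations.

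I expect the main obstacle to be this counting step, which is what produces the bound $|\mathcal{I}_i|$ rather than the cruder $\sum_{s_i \in \mathcal{I}_i}|\mathcal{A}_i(s_i)|$: one must invoke perfect recall to charge each action at a decision infostate to a \emph{distinct} successor infostate, and it is exactly because some successors are terminal that terminal infostates must be included in the count. A secondary point requiring care is the first step, where I rely on realization equivalence to turn a full-game best response that stays within restricted actions into a bona fide restricted-game policy, so that the $\epsilon$-NE property of $\pi^{r*}$ can be applied to rule out any improvement exceeding $\epsilon$ without leaving the restricted action set.
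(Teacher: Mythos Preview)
Your proposal is correct and follows essentially the same approach as the paper. The paper defines an infostate $s'_i=(s_i,a_i,o_i)$ as ``covered'' once some population policy selects $a_i$ at $s_i$, argues (exactly as you do) that each non-final iteration adds a new action at some non-terminal infostate and hence covers at least one previously uncovered infostate, and concludes that termination occurs within $\sum_i|\mathcal{I}_i|$ iterations; your surjection from non-root infostates onto legal (non-terminal infostate, action) pairs is just an explicit rendering of this same perfect-recall counting, and your care about terminal infostates matches the paper's remark that $\mathcal{I}_i$ must include terminal ones.
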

\begin{proof}
Consider an infostate $s'_i = (a_i^0, o_i^1, \ldots, a_i^t, o_i^{t+1})$ for player $i$ as \emph{covered} in the restricted game if any of player $i$'s population policies chooses action $a_i^t$ in infostate $s_i = (a_i^0, o_i^1, \ldots, a_i^{t-1}, o_i^t)$. At each but the final iteration, at least one player $i$ has $v_i(\mathbb{BR}_i(\pi^{r*}_{-i}), \pi^{r*}_{-i}) > v_i(\pi^{r*}) + \epsilon$. Since $\pi^{r*}_i$ is an $\epsilon$-BR to $\pi^{r*}_{-i}$ in the restricted game, the BR $\mathbb{BR}_i(\pi^{r*}_{-i})$ must be choosing at least some action $a_i$ at some non-terminal infostate $s_i$ that was not previously chosen by any population policy. Adding this action to the restricted game covers at least one previously uncovered infostate: all infostates $s'_i = (s_i, a_i, o_i)$, for any observation $o_i$. All infostates will therefore be covered in at most $\sum_i |\mathcal{I}_i|$ iterations, at which point the next iteration must terminate.
\end{proof}




\begin{proposition}\label{prop:kgmp}
In $k$-GMP with $n$ actions, XDO terminates in $2n$ iterations.
\end{proposition}
\begin{proof}
In a given iteration, consider the restricted game for a single GMP game. If player 2 is allowed an action that player 1 is not, such an action will be player 2's NE, and player 1's BR will add that action. If player 2 is not allowed an action unavailable to player 1, player 2's BR will be a new action unavailable to player 1, if one exists. Thus at least one of the players add a new action in every GMP game in parallel, until both players add all actions.
\end{proof}

\begin{proposition}
In $(k, m)$-clone GMP with $n$ classes, XDO adds at most $2n$ actions for each player.
\end{proposition}
\begin{proof}
The proof repeats that of Proposition \ref{prop:kgmp}, but considering classes instead of actions, because it does not matter which member of a class is added. Once at least one member of each class is added to the restricted game, the meta-NE has full-game exploitability 0, and XDO terminates. In iterations where a BR for a player does not add a new class, it may add a new action member of an existing class. In total, $2n$ actions may be added for each player.
\end{proof}


\begin{proposition}
There exist perturbed $k$-GMP games with $n$ actions in which PSRO cannot terminate in fewer than $k(n - 1) + 1$ iterations.
\end{proposition}
\begin{proof}
For each policy $\pi_2 \in (\Delta(n))^k$ for player 2, consider the perturbed $k$-GMP game that gives player 1 payoff $\frac1{\pi_2(j, a)}$ for matching action $a$ in stage game $j$. In the NE for this game, player 2 has policy $\pi_2$. This implies that the set of policies that are NE of any perturbed $k$-GMP game has positive $k(n -1)$-dimensional volume.

Consider the space of stochastic policies that can be spanned by mixing a specific population of at most $k(n - 1)$ pure strategies. The dimension of this space is at most $k(n - 1) - 1$. When we consider the union of all such spaces for the finitely many possible populations of this size, this set has zero $k(n - 1)$-dimensional volume.

It follows that there exists a perturbed $k$-GMP game $G$, and a neighborhood around player 2's policy in the NE of $G$, such that no policy in that neighborhood is spanned by any population of $k(n - 1)$ pure strategies. For sufficiently small $\epsilon$, no $\epsilon$-NE for $G$ can be found until PSRO adds at least $k(n - 1) + 1$ pure strategies to its population.
\end{proof}

\section{Comparison to SDO}
To illustrate the difference between the two algorithms, we provide an example run of XDO (Figure \ref{fig:xdo}) and SDO (Figure \ref{fig:sdo}) on the same game that is presented as an example in section 4.1.3 in the SDO paper \cite{bosansky2014exact}.

Like in that work, we represent actions that are in the restricted game by bold arrows. This example demonstrates how SDO creates smaller restricted games than XDO because it only considers infostates that can be reached by compatible sequences. In tabular games, SDO results in a cautious approach that only considers a small subset of infostates in order to prevent adding suboptimal actions to the restricted game. However, as we describe below, this will cause obstacles when trying to scale SDO to large games.

\begin{figure}
    \centering
    \begin{subfigure}{0.6 \textwidth}
        \includegraphics[width=\textwidth]{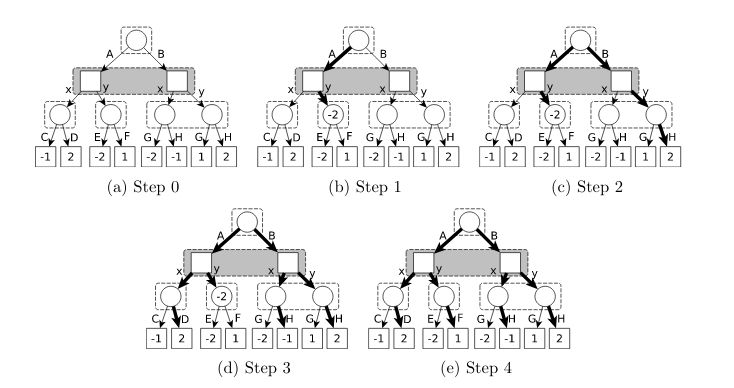}
        \caption{SDO \cite{bosansky2014exact}}
        \label{fig:sdo}
    \end{subfigure}
    \begin{subfigure}{0.39 \textwidth}
        \includegraphics[width=\textwidth]{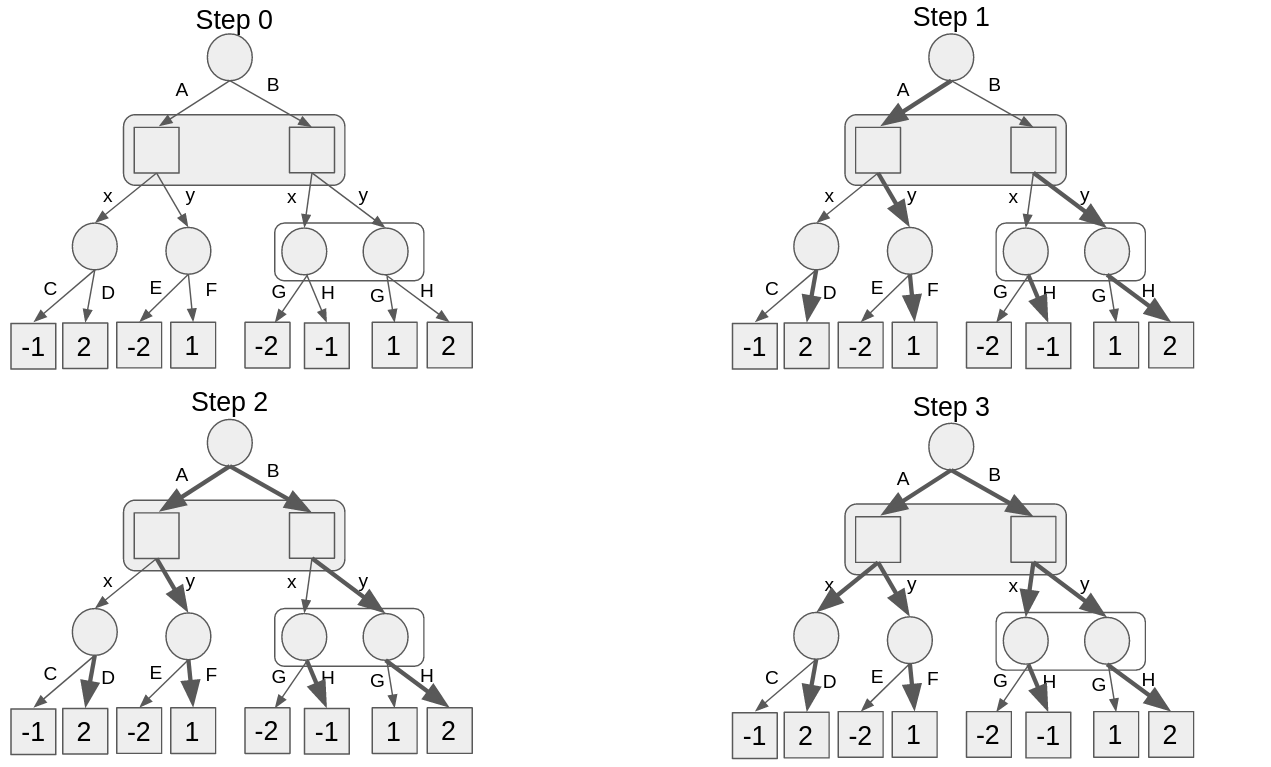}
        \caption{XDO}
        \label{fig:xdo}
    \end{subfigure}
    \label{fig:xdo_sdo}
    \caption{}
\end{figure}

Extensive-form pure strategies specify an action at every infostate. In this example we will refer to extensive-form pure strategies by concatenating the actions the strategy takes in every infostate. For example, the pure strategy for the circle player in step 1 in our diagram corresponds to ADFH. Sequence-form pure strategies specify a sequence of actions that must be internally consistent. For example, $\{\emptyset, A, AD\}$ is a valid sequence-form pure strategy but $\{\emptyset, B, AD\}$ is not.

In step 0, both SDO and XDO start with an empty game tree. Let’s assume that the default strategy for both algorithms is uniform random.

In step 1, both SDO and XDO add the same best responses to the default strategy for both players. However, SDO adds actual sequences of actions, in particular $\{\emptyset, A, AD\}$ for the circle player and $\{\emptyset, y\}$ for the box player. Since the AD sequence of actions for the circle player is not compatible with the y action for the square player, the restricted game is the game with A as the available action for circle and y as the available action for square. But since neither AE nor AF are in the sequence population, the restricted game in SDO at this step terminates in a temporary leaf at that point.

In contrast, XDO adds full extensive form strategies, in this case adding ADFH for the circle player and y for the square player. Now the restricted game for XDO is as in step 1 in the diagram. Since XDO adds full extensive form strategies, there is no need to create a temporary leaf node. 

After only one iteration, SDO and XDO result in a different restricted game. This is because SDO adds sequences of actions (best responses in sequence form) to the population while XDO adds a pure strategy defined at every infostate (best responses in extensive form). The SDO restricted game is much smaller than the XDO restricted game because SDO only considers infostates that can be reached by compatible sequences in the population. There are three more steps for SDO, which are described in their paper. XDO has only two more steps, which are described in the figure. 

There currently exists one main algorithm that can scale up the double oracle approach to large games through deep reinforcement learning, which is PSRO. We propose another, Neural Extensive-Form Double Oracle (NXDO). 
Similarly extending SDO to large games via neural networks is one possible direction for future work.




\section{Empirical Tests on Perturbed $k$-GMP}
\begin{figure}[H]
    \captionsetup[subfigure]{labelformat=empty}

    \centering
    \begin{subfigure}{0.23\textwidth}
        \centering
        \includegraphics[width=1\textwidth]{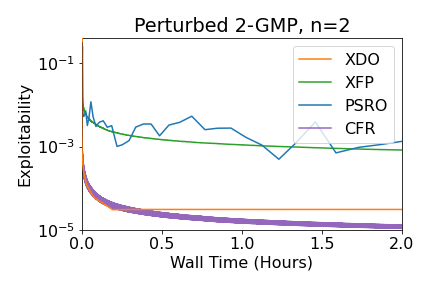}
        \caption{}
        \label{fig:64-gmp}
    \end{subfigure}
    ~
    \begin{subfigure}{0.23\textwidth}
        \centering
        \includegraphics[width=1\textwidth]{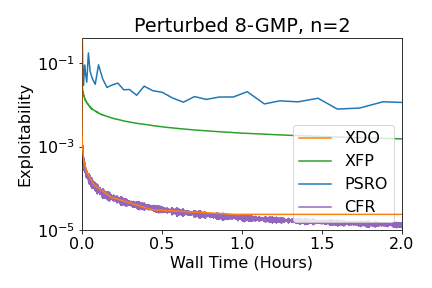}
        \caption{}
        \label{fig:128-gmp}
    \end{subfigure}
    ~
    \begin{subfigure}{0.23\textwidth}
        \centering
        \includegraphics[width=1\textwidth]{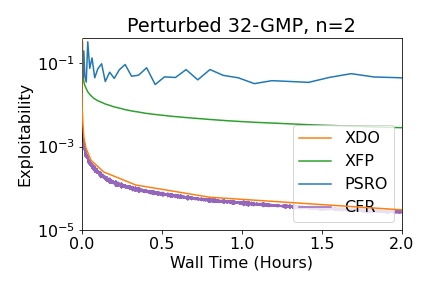}
        \caption{}
        \label{fig:256-gmp}
    \end{subfigure}
    ~
    \begin{subfigure}{0.23\textwidth}
        \centering
        \includegraphics[width=1\textwidth]{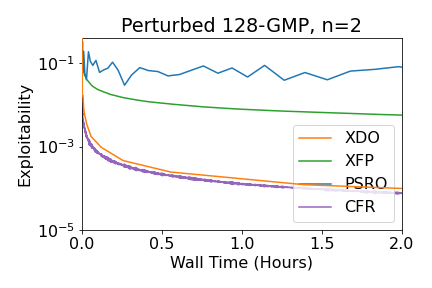}
        \caption{}
        \label{fig:512-gmp}
    \end{subfigure}
    \label{fig:k-gmp}
    \caption{Exploitability vs wall time hours with XDO, XFP, PSRO, and CFR in Perturbed $k$-GMP as the number of subgames rises. XDO and CFR scale well even when there are many subgames while PSRO becomes unable to reach a low exploitability.}
\end{figure}

We run XDO, XFP, PSRO (DO) and CFR on the Perturbed $k$-GMP games. In perturbed $k$-GMP, a chance node randomly selects one of the $k$ stage games and the outcome is observed by both players. A static perturbation uniformly sampled between $(-1.0, 1.0)$ is added to every GMP payoff where both players select the same action to create a unique equilibrium solution for each of the $k$ subgames. We see that as $k$ is increased from 2 to 128, XDO and CFR exploitability increases but remains relatively low, but PSRO performance relative to computation time dramatically decreases. For this experiment, we use CFR as the meta-NE solver for XDO. 

\section{Game Descriptions}
\paragraph{$\bm{m}$-Clone Leduc poker:}$m$-Clone Leduc poker is similar to Leduc poker but with every action duplicated $m$ times, such that instead of a single call, fold, and bet action there are $m$ identical call, fold, and bet actions. As the number of cloned actions increases, we expect the performance of methods based on CFR and FP such as DREAM and NFSP to deteriorate, while the performance of XDO and PSRO remains largely unchanged because the size of the restricted games scale with the size of the meta-game population rather than the size of the action space. 

\paragraph{Oshi-Zumo} Oshi-Zumo is a zero-sum two-player sequential game where both players try to move a token to the other player's side of the board. Both players start with an allotment of coins and each step simultaneously bet an amount of coins. The player who bets the higher amount gets to move the token toward the opponent. A player wins if they are able to move the token off the board on their opponent's side or if the token is on the opponent's side when either no more coins remain or the time horizon is reached. Reward is 1 for wining, -1 for losing, and 0 for a tie, which can occur when the token is in the middle of the board when the game ends. We consider a variant with 4 coins, and a board of 3 spaces with a time horizon of 6. 

\paragraph{Loss Game:} The Loss Game is a zero-sum two-player sequential continuous action game in which agents simultaneously apply bounded adjustments to a real valued vector of parameters $\bm{x} = [x_1,\dots,x_d]$ in order to optimize a fixed loss function's scalar output $L: \mathbb{R}^d \mapsto \mathbb{R}$ . One agent aims to maximize this output while to other aims to minimize it. Each timestep, agents observe the current vector of parameter values along with the function's scalar output value and provide a bounded continuous action describing an adjustment vector to add to the current parameters. The sum of both agents' adjustments is applied and player 1 is rewarded with the value of the function's output while player 2 is provided with the negative of this value. The game lasts 10 steps and the parameters' adjusted values are preserved after each step.

We consider a 2D action space variation with the loss function $L(x_1, x_2) = \sum_j\sin(x_j)$ and a 16D action space variation with $L(x_1,\dots,x_{16}) = \sin(\sum_j x_j) + \sum_j\sin(x_j) / 16$. These functions were chosen to demonstrate games in which it is advantageous to mix between multiple strategies in many infostates. The 2 and 16 dimensional action spaces where chosen to represent continuous spaces that, respectively, can and cannot be directly binned into a tractable amount of discrete actions. We test the binned discrete action version of the 2D continuous action space Loss Game with both 10 and 100 bins in each of the two dimensions , totalling 100 and 10,000 actions respectively.

\section{Tabular Experiment Details}
For calculating tabular BRs, we use an oracle implementation supplied by the OpenSpiel framework \cite{LanctotEtAl2019OpenSpiel}.

For tabular PSRO, we use linear programming to solve each meta-NE. To calculate the empirical payoff matrix, we play 100 games per policy combination. 


For CFR$^+$, CFR, MCCFR, and XFP, we use the default implementations and settings provided by OpenSpiel.


For tabular XDO, we repeatedly improve the $\epsilon$-NE (e.g. perform CFR iterations) for the restricted game until both of the following conditions are met: \textit{a)} the exploitability of the meta-NE in the restricted game is less than $\epsilon$, and \textit{b)} the exploitability of the meta-NE in the restricted game is less than the exploitability of the meta-NE in the full game. We initialize $\epsilon$ to be $0.35$, and each XDO iteration, we set $\epsilon$ to $0.98 * \epsilon$.

This has two benefits: First, we guarantee that all BRs added to the population are useful, in that they contain at least some out-of-restricted-game action, and thus expand the restricted game.  Second, if the restricted game already contains the actions necessary for a NE in the full game, we simply continue to improve the meta-NE ad infinitum, rather than add a new BR and restart the meta-NE solver.

For SDO, we use CFR+ as an inner-loop solver, instead of a linear program as used in the SDO paper \citep{bosansky2014exact}. The default pure strategy we use is to choose the first possible action. To find best-response sequences, we compute a tabular BR in the full game.

Unless otherwise stated, all non-deterministic tabular experiments were run with 3 seeds each.

\section{Neural Experiment Details}

For reinforcement learning neural experiments, we use DDQN for discrete action spaces and PPO for the Loss Game continuous action space. The NFSP RL policy is also trained with DDQN. All neural experiments were run with 3 seeds each.

\subsection{Best Response Stopping Conditions}

In both 20-Clone Leduc Poker and the Loss Game, BRs in PSRO and NXDO are trained for a minimum of 4e4 episodes and a maximum of 1e5 episodes. Every 2e4 episodes, a check is performed as to whether the average episode reward is at least 0.01 higher than it was in the last check. If this check fails, the BR is considered plateaued and is stopped early. This stopping condition was chosen to ensure that BRs are allowed to saturate in performance without spending a redundantly large amount of episodes training.

\subsection{Meta-NE Solver Stopping Conditions}

For NXDO, we use NFSP as the meta-NE solver. Similar to tabular XDO, we increase the amount of experience allocated to solve the meta-NE over multiple NXDO iterations. This is done because adding additional diverse BR policies to the population in early iterations of NXDO is more effective than spending a large (or any) amount of time solving for the restricted game meta-NE with a very small population. In addition to a schedule used to increase the amount of training the for meta-NE solver, for the first $n$ NXDO iterations, we execute a "warm-start" in which we use an untrained NFSP average policy network in lieu of training the meta-NE. After $n$ warm start iterations, we train NFSP to solve for the meta-NE with a schedule that gradually increases the amount of experience NFSP is allocated for training each NXDO iteration. As the NXDO population grows larger, it is advantageous to train the meta-NE solver for longer because the restricted game solution will better approximate the original game NE. We use two different meta-NE stopping condition schedules for 20-Clone Poker and the Loss Game, described below:

The meta-NE stopping condition schedule used for NXDO on 20-Clone Poker solves for an $\epsilon$-NE of the restricted game, where $\epsilon$ is decreased over NXDO iterations. After 7 warm start iterations, $\epsilon$ is initialized at 1.0, and, each NXDO iteration, $\epsilon$ is halved if the final average reward of the previous iteration’s BRs is below $(\epsilon + 0.05)$. $\epsilon$ is not allowed to fall below 0.05. The exact stopping condition for NFSP in the inner loop of NXDO is to train for at least 2e5 episodes and then stop when the mean DDQN BR reward against average policies falls below $\epsilon$.

A simpler schedule is used for NXDO on the Loss Game. After 5 warm start iterations, we train NFSP on the restricted game for 1e6 time steps. In each subsequent NXDO iteration, the amount of time steps allocated to training NFSP meta-NE solver in the new iteration is the previous amount multiplied by a coefficient of 1.5. 

When solving the normal-form restricted game in PSRO, we calculate the meta-NE by running FP for 2000 iterations on the empirical payoff metric between population policies. To calculate the empirical payoff matrix, for each policy combination, we play 3000 games for $m$-Clone Leduc or 1000 games for the Loss Game. Empirical payoff matrix evaluations for PSRO are not counted when measuring experience used to train. 

\subsection{Training Hyperparameters}

For 20-Clone Leduc, we used the OpenSpiel Leduc NFSP implementation as a starting point for our hyperparameters. We then adjusted our learning rate, batch size, and rollout steps to allow better wall-time speed through parallelization while maintaining roughly the same sample efficiency as the OpenSpiel reference. The same hyperparameters are used for the 20-Clone Leduc NXDO meta-NE solver and NFSP on the base 20-Clone Leduc game.

When training for the Loss Game, aside from the network architecture and epsilon greedy annealing period, the same NFSP hyperparameters were reused when training the NXDO meta-NE solver. Because the binned discrete action Loss Game is much more complex than the induced NXDO restricted game, new DDQN hyperparameters were found for NFSP on the original game using a random search of 50 samples over a space defined in table \ref{table:ddqn-lossgame-nfsp-base-search-space}. DDQN hyperparameter trials were performed by training and measuring performance against a fixed adversary.

DDQN uses epsilon-greedy exploration. For 20-Clone Leduc experiments, in NFSP (both on the base game and restricted game), the epsilon exploration parameter is linearly annealed from 0.06 to 0.001 over 20e6 timesteps. In PSRO and NXDO, where BRs train over shorter periods, epsilon is linearly annealed from 0.20 to 0.001 over 1e5 timesteps. For NFSP in the Loss Game (both on the base game and restricted game) the epsilon exploration parameter is annealed from 0.06 to 0.001 over 500e6 timesteps. For all DDQN uses, learning does not start until 16,000 steps have been collected in the circular replay buffer.

Hyperparameters for Loss Game PPO BRs used by NXDO and PSRO were found in a random search of 50 samples over a space defined in table \ref{table:ppo-lossgame-search-space}. Like DDQN, PPO hyperparameter trials were performed by training and measuring performance against a fixed adversary. The same hyperparameters were used in both variations of the Loss Game.

Approximate exploitability in the Loss Game was measured by training a continuous-action PPO BR against each meta-NE  (for NXDO and PSRO) or average policy (for NFSP) checkpoint and measuring the final average reward. These BRs share the same hyperparameters and stopping conditions as those used to train NXDO and PSRO.

Tables \ref{table:ddqn-20clone} through \ref{Tab:nfsp-lossgame-base} display parameters used. Our deep RL implementations for NFSP, PSRO, and NXDO are built on top of the open source RLlib framework. \cite{pmlr-v80-liang18b}. Any parameters not listed were set to the RLlib 1.0.1 default values.

All RL algorithms use the Adam optimizer \cite{DBLP:journals/corr/KingmaB14} with $\epsilon$ set to 1e-8 and 4 parrallel experience workers with 1 environment per worker. A discount factor of 1.0 is used in all games. For 20-Clone Poker, all algorithms used a fully-connected network with two layers of size 128 and relu activations. For the Loss Game, two layers of size 32 and tanh activations were used.


\begin{table}[H]
\centering
\begin{tabular}{ll}
circular buffer size & 2e5 \\
total rollout experience gathered each iter & 1024 steps \\
learning rate & 0.01 \\
batch size & 4096 \\
TD-error loss type & MSE \\
target network update frequency & every 10,000 steps \\
\end{tabular}
\caption{20-clone Leduc DDQN hyperparameters (Used by NXDO BRs and meta-NE solver, PSRO BRs, base game NFSP)}
\label{table:ddqn-20clone}

\end{table}

\begin{table}[H]
\centering
\begin{tabular}{ll}
RL learner params & DDQN, see Table \ref{table:ddqn-20clone} \\
anticipatory param & 0.1  \\
avg policy reservoir buffer size & 2e6 \\
avg policy learning starts after & 16,000 steps \\
avg policy learning rate & 0.1 \\
avg policy batch size & 4096 \\
avg policy optimizer & SGD \\
\end{tabular}
\label{Tab:nfsp-20clone}
\caption{20-clone Leduc NFSP hyperparameters (Used by base game NFSP, NXDO meta-NE solver)}
\end{table}

\begin{table}[H]
\centering
\begin{tabular}{ll}
GAE $\lambda$ & 1.0 \\
entropy coeff & 0.01 \\
clip param & 0.2 \\
KL target & 0.01 \\
learning rate & 5e-4 \\
train batch size & 2048 \\
sgd minibatch size & 256 \\
num sgd iters on train batch & 30 \\
separate policy and value networks & Yes \\
continuous action range & [-1.0, 1.0] for each dim
\end{tabular}
\label{Tab:ppo-params}
\caption{Loss Game Continuous Action PPO hyperparameters (Used by NXDO BRs, PSRO BRs)}
\end{table}

\begin{table}[H]
\centering
\begin{tabular}{ll}
circular buffer size & 2e5 \\
total rollout experience gathered each iter & 1024 steps \\
learning rate & 0.01 \\
batch size & 4096 \\
TD-error loss type & MSE \\
target network update frequency & every 10,000 steps \\
\end{tabular}
\caption{Loss Game NXDO meta-NE DDQN hyperparameters (Used by NXDO meta-NE solver)}
\label{table:ddqn-lossgame-nxdo}

\end{table}

\begin{table}[H]
\centering
\begin{tabular}{ll}
RL learner params & DDQN, see Table \ref{table:ddqn-lossgame-nxdo} \\
anticipatory param & 0.1  \\
avg policy reservoir buffer size & 2e6 \\
avg policy learning starts after & 16,000 steps \\
avg policy learning rate & 0.1 \\
avg policy batch size & 4096 \\
avg policy optimizer & SGD \\
\end{tabular}
\caption{Loss Game NXDO meta-NE NFSP hyperparameters (Used by NXDO meta-NE solver)}
\label{Tab:nfsp-nxdo-lossgame}
\end{table}

\begin{table}[H]
\centering
\begin{tabular}{ll}
circular buffer size & 1e5 \\
learning starts after & 16,000 steps \\
total rollout experience gathered each iter & 64 steps \\
learning rate & 0.007 \\
batch size & 4096 \\
TD-error loss type & MSE \\
target network update frequency & every 1e5 steps \\
\end{tabular}
\caption{Loss Game NFSP DDQN hyperparameters (Used by base game NFSP)}
\label{table:ddqn-lossgame-nfsp-base}

\end{table}

\begin{table}[H]
\centering
\begin{tabular}{ll}
RL learner params & DDQN, see Table \ref{table:ddqn-lossgame-nfsp-base} \\
anticipatory param & 0.1  \\
avg policy reservoir buffer size & 2e6 \\
avg policy learning starts after & 16,000 steps \\
avg policy learning rate & 0.07 \\
avg policy batch size & 4096 \\
avg policy optimizer & SGD \\
\end{tabular}
\caption{Loss Game NFSP hyperparameters (Used by base game NFSP)}
\label{Tab:nfsp-lossgame-base}
\end{table}

\begin{table}[H]
\centering
\begin{tabular}{ll}
circular buffer size & \{5e4, 1e5, 2e5\} \\
total rollout steps gathered each iter & \{16, 32, 64, 128\}\\
learning rate & log-uniform([0.0001, 0.1]) \\
batch size & \{1024, 2048, 4096\} \\
target network update frequency & \{1e3, 1e4, 1e5\} \\
\end{tabular}
\caption{Loss Game DDQN hyperparameter search space (Used by base game NFSP)}
\label{table:ddqn-lossgame-nfsp-base-search-space}

\end{table}

\begin{table}[H]
\centering
\begin{tabular}{ll}
GAE $\lambda$ & \{0.9, 1.0\} \\
entropy coeff & \{0.0, 0.001, 0.01, 0.1\} \\
clip param & \{0.1, 0.2, 0.3\} \\
KL target & \{0.001, 0.01, 0.1\} \\
learning rate & \{5e-2, 5e-3, 5e-4, 5e-5, 5e-6\} \\
train batch size & \{2048, 4096\} \\
sgd minibatch size & \{128, 256, 512, 1024\} \\
num sgd iters on train batch & \{1, 5, 10, 30, 60\} \\
\end{tabular}
\caption{Loss Game Continuous Action PPO hyperparameter search space (Used by NXDO BRs, PSRO BRs)}
\label{table:ppo-lossgame-search-space}
\end{table}

\section{Costs and Benefits of Extensive-Form Restricted Games}

\begin{figure}[H]

    \centering
    \begin{subfigure}{0.3\textwidth}
        \centering
        \includegraphics[width=1\textwidth]{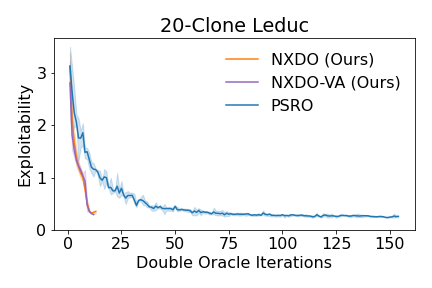}
        \caption{}
        \label{fig:20-leduc-iterations}
    \end{subfigure}
    ~
    \begin{subfigure}{0.3\textwidth}
        \centering
        \includegraphics[width=1\textwidth]{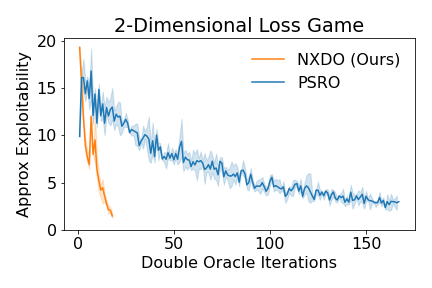}
        \caption{}
        \label{fig:2d-loss-iterations}
    \end{subfigure}
    ~
    \begin{subfigure}{0.3\textwidth}
        \centering
        \includegraphics[width=1\textwidth]{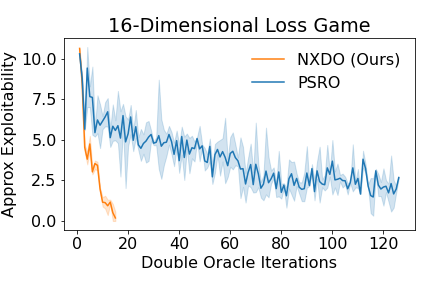}
        \caption{}
        \label{fig:16d-loss-iterations}
    \end{subfigure}
    \caption{NXDO exploitability in 20-Clone Leduc (a) and approximate exploitability in the Loss Game (b and c) as a function of Double Oracle iterations in NXDO and PSRO.}
    \label{fig:do-iterations}

\end{figure}

Shown in figure \ref{fig:do-iterations}, we compare the exploitability of NXDO against that of PSRO in terms of Double Oracle iterations. NXDO acheives a low exploitability in significantly less iterations due to mixing population strategies at every infostate in the game rather than just the at root like is done in PSRO. When both are limited to a small population of behavioral strategies, the extensive-form restricted game allows for much more expressive power in solving for a meta-NE than a normal-form restricted game does.

\begin{figure}[H]

    \centering
    \begin{subfigure}{0.3\textwidth}
        \centering
        \includegraphics[width=1\textwidth]{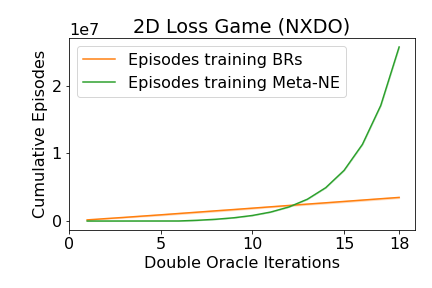}
        \caption{}
        \label{fig:2d-loss-episodes}
    \end{subfigure}
    ~
    \begin{subfigure}{0.3\textwidth}
        \centering
        \includegraphics[width=1\textwidth]{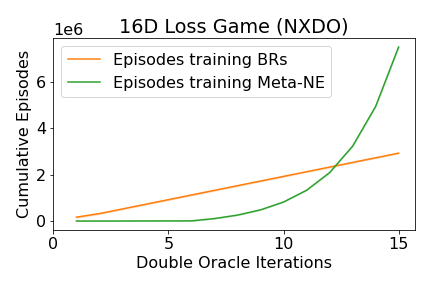}
        \caption{}
        \label{fig:16d-loss-episodes}
    \end{subfigure}
    ~
    \begin{subfigure}{0.3\textwidth}
        \centering
        \includegraphics[width=1\textwidth]{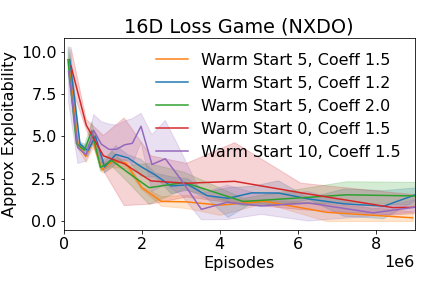}
        \caption{}
        \label{fig:16d-ablations}
    \end{subfigure}

    \label{fig:ablations}
    \caption{(a and b) In NXDO tests on the Loss Game, the cumulative amount of experience in episodes to either train BRs or calculate meta-NE with NFSP as a function of Double Oracle iterations. After the warm start phase, each NXDO iteration, we multiply the amount of episodes we spend training NFSP by a coefficient of 1.5. (c) NXDO ablations on the 16-Dimensional Loss Game. We vary the number of warm start iterations in which we spend zero time training the NFSP meta-NE solver and the coefficient by which we multiply the episodes spent training NFSP each iteration after the warm start.}
\end{figure}

This additional expressive power in the restricted game does come at an increased computational cost which needs to be taken into account. NXDO is most applicable when the need to mix population strategies at many infostates outweighs the extra computation needed to compute the extensive-form restricted game. Figures \ref{fig:2d-loss-episodes} and \ref{fig:16d-loss-episodes} show the cumulative episodes spent training either BRs or meta-NE vs NXDO iterations. The time spent training a meta-NE is gradually increased each iteration to trade quickly adding BRs for solving the meta-NE at a high accuracy.

\section{Neural Methods on Kuhn and Leduc Poker}

\begin{figure}[H]

    \centering
    \begin{subfigure}{0.3\textwidth}
        \centering
        \includegraphics[width=1\textwidth]{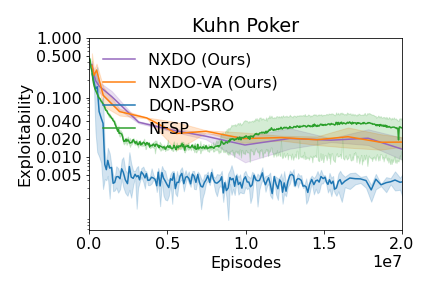}
        \caption{}
        \label{fig:nxdo_kuhn}
    \end{subfigure}
    ~
    \begin{subfigure}{0.3\textwidth}
        \centering
        \includegraphics[width=1\textwidth]{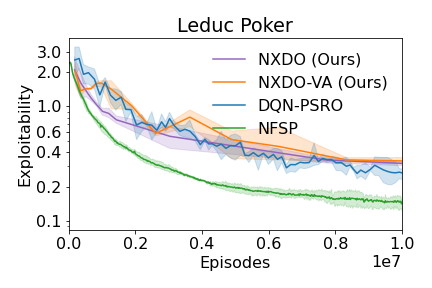}
        \caption{}
        \label{fig:nxdo_leduc}
    \end{subfigure}

    \label{fig:nxdo_kuhn_leduc}
    \caption{NXDO, NXDO-VA, PSRO, and NFSP exploitability vs episodes collected on Kuhn and Leduc Poker}

\end{figure}

We test NXDO, NXDO-VA, PSRO, and NFSP on Kuhn and Leduc Poker using the same hyperparameters as used with 20-Clone Leduc. In these smaller games, NXDO performs less competitively because the time spent computing the extensive-form meta-NE strategies is large compared to the amount of time needed to train nearly all pure strategies for Kuhn and Leduc. NFSP also reaches an initial low exploitability quickly in part due to the small action-space sizes relative to other games tested.

\section{Additional Loss Game Experiments}

We analyze the effect of different stopping condition schedules for the NFSP meta-NE solver when training NXDO on the 16-Dimensional Loss Game. In figure \ref{fig:16d-ablations} we compare a variaty of warm start amounts (the number of initial iterations in which zero time is spent training the meta-NE) and the coefficient at which we increase the number of episodes spent training any meta-NE thereafter. The default parameters are a warm start of 5 and a coefficient of 1.5, thus we spend 5 fixed NXDO iterations with a randomly initialized meta-NE solution, then train our sixth iteration meta-NE for the starting amount of 1e6 episodes, and then train each subsequent meta-NE for 1.5x the number of episodes in the previous iteration. Ablations with other reasonable values considered show that the fine details of such a schedule have only minor effects on performance in the Loss Game.





\begin{table}[h]
\parbox{.5\linewidth}{
\centering
\begin{tabular}{rll}
NXDO against PSRO & $0.26\pm0.06$ & \\
NXDO against NFSP & $1.28\pm0.06$ & \\
PSRO against NFSP & $1.26\pm0.08$ & 
\end{tabular}
\caption{2D Loss Game Round Robin Payoffs}
\label{table:2d-roundrobin}
}
\hfill
\parbox{.5\linewidth}{
\centering
\begin{tabular}{rll}
NXDO against PSRO & $4.43\pm0.06$ &
\end{tabular}
\caption{16D Loss Game Round Robin Payoffs}
\label{table:16d-roundrobin}

}
\end{table}

We also perform a round robin evaluation for the 2-Dimensional and 16-Dimensional Loss Game by measuring empirical payoffs between each algorithm's checkpoints after respectively 2e7 and 1e7 episodes of training. For a given matchup between two algorithms, we play 1000 games, swapping sides every other game, for each pairing of either algorithm's 3 seeds, resulting in 9000 games total. The reward means and 95\% confidence intervals are reported in tables \ref{table:2d-roundrobin} and \ref{table:16d-roundrobin}. NXDO beats PSRO and NFSP in our Loss Game evaluations.

\section{Empirical Analysis of XDO Restricted Game Sizes}
\begin{figure}[H]

    \centering
    \begin{subfigure}{0.3\textwidth}
        \centering
        \includegraphics[width=1\textwidth]{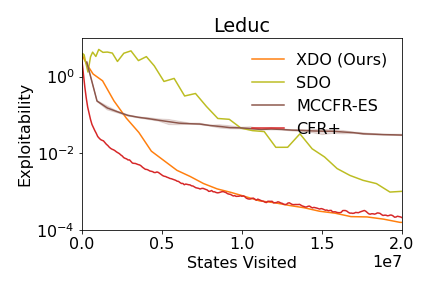}
        \caption{}
        \label{fig:leduc-states-exp}
    \end{subfigure}
    ~
    \begin{subfigure}{0.3\textwidth}
        \centering
        \includegraphics[width=1\textwidth]{neurips_camera_ready_tabular_images/XDO_vs_CFR_vs_MCCFR_Infostates_2_clone_leduc.png}
        \caption{}
        \label{fig:leduc-dummy-states-exp}
    \end{subfigure}

    \label{fig:size-of-restricted-game}
    \caption{XDO exploitability compared to CFR$^+$, SDO and MCCFR as a function of states visited in (a) Leduc poker and (b) 2-Clone Leduc Poker.}
\end{figure}

XDO outperforms CFR$^+$ in games where Nash equilibria require mixing over a small fraction of actions. For example, XDO performs similarly to CFR$^+$ in standard Leduc poker, but XDO significantly outperforms CFR$^+$ in 2-Clone Leduc poker. XDO scales better in this transition because a restricted game sufficient to achieve a low exploitability can be induced with only a portion of the game's actions. Such a restricted game is much smaller in terms of total game states and thus requires less computation to solve. The number of states in 2-Clone Leduc poker is roughly 50 times that of standard Leduc poker, but when moving from standard to 2-Clone Leduc poker, the number of states in the XDO restricted game increases by a much smaller proportion.
    
\begin{figure}[H]

    \centering
    \begin{subfigure}{0.3\textwidth}
        \centering
        \includegraphics[width=1\textwidth]{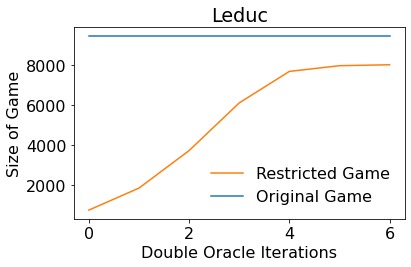}
        \caption{}
        \label{fig:leduc-size}
    \end{subfigure}
    ~
    \begin{subfigure}{0.3\textwidth}
        \centering
        \includegraphics[width=1\textwidth]{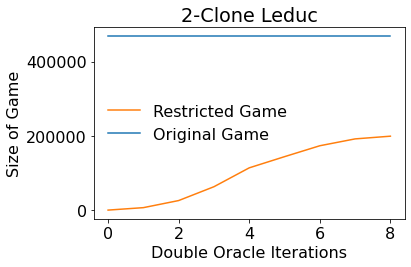}
        \caption{}
        \label{fig:leduc-dummy-size}
    \end{subfigure}
    ~
    \begin{subfigure}{0.3\textwidth}
        \centering
        \includegraphics[width=1\textwidth]{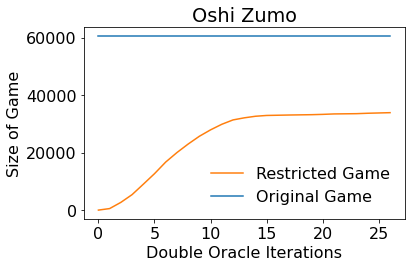}
        \caption{}
        \label{fig:oshi-zumo-size}
    \end{subfigure}

    \label{fig:size-of-restricted-game-2}
    \caption{The size, in states, of the restricted game induced by the BR population in XDO in various games.}
\end{figure}

In our XDO experiments, the final size (in states) of the restricted game in Leduc Poker is roughly 85\% of the size of the full game.  The final size of the restricted game in 2-Clone Leduc Poker is only approximately 43\% that of the full game, so XDO performs proportionally less computation to solve the restricted game meta-NE than it would take to directly solve the full-game NE with all actions considered. XDO also outperforms CFR in Oshi Zumo, where the final size of the restricted game is only half that of the full game.

\section{Computational Costs}
Experiments were performed a shared local computer with 128 CPU-cores, 2 RTX 3090 GPUs, and 256GB of RAM. Due to small network sizes, most neural experiments were performed without GPU acceleration. Neural experiments on 20-Clone Poker and the Loss Game used 8 to 16 cores each and took between 2 and 4 days to complete using 10 to 40GB of RAM. Tabular XDO experiments were run for up to 1 day, using a single core each and between 1 to 10GB of RAM.

\section{Experiment Code}
Code for tabular experiments can be found at \url{https://github.com/indylab/tabular_xdo}

Code for neural experiments can be found at \url{https://github.com/indylab/nxdo}

\end{document}